\def\BibTeX{{\rm B\kern-.05em{\sc i\kern-.025em b}\kern-.08em
    T\kern-.1667em\lower.7ex\hbox{E}\kern-.125emX}}
\newtheorem{theorem}{Theorem}
\newtheorem{lemma}{Lemma}
\newtheorem{definition}{Definition}
\DeclareMathOperator{\Tr}{Tr}
\tikzset{dexteritas/.cd,
shifted path label/.style={pos=0.5,draw=none,rectangle,auto,sloped}}
\tikzset{shifted path/.style args={from #1 to #2 by #3}{insert path={
let \p1=($(#1.east)-(#1.center)$),
\p2=($(#2.east)-(#2.center)$),\p3=($(#1.center)-(#2.center)$),
\n1={veclen(\x1,\y1)},\n2={veclen(\x2,\y2)},\n3={atan2(\y3,\x3)} in
(#1.{\n3+180+asin(#3/\n1)}) to 
(#2.{\n3-asin(#3/\n2)})
}}}
\tikzset{labeled shifted path/.style args={from #1 to #2 by #3 label #4}{insert path={
let \p1=($(#1.east)-(#1.center)$),
\p2=($(#2.east)-(#2.center)$),\p3=($(#1.center)-(#2.center)$),
\n1={veclen(\x1,\y1)},\n2={veclen(\x2,\y2)},\n3={atan2(\y3,\x3)} in
(#1.{\n3+180+asin(#3/\n1)}) to node[dexteritas/shifted path label]{#4}
(#2.{\n3-asin(#3/\n2)})
}}}
\tikzstyle{block} = [draw,align=center,rectangle,thick,minimum height=4em,minimum width=4em,text width=3cm]
\tikzstyle{state} = [draw,align=center, circle,inner sep=0mm,minimum size=1.7cm,text width=1cm]
\tikzstyle{connector} = [->,thick]
\tikzstyle{line} = [thick]
\tikzstyle{guide} = []
\newcommand{\EX}{\mathbb{E}}
\newcommand{\PR}{\mathbb{P}}
\newcommand{\BN}{\mathbb{N}}
\newcommand{\CX}{\mathcal{X}}
\newcommand{\CA}{\mathcal{A}}
\newcommand{\CS}{\mathcal{S}}
\newcommand{\CH}{\mathcal{H}}
\newcommand{\CE}{\mathcal{E}}
\newcommand{\CL}{\mathcal{L}}
\newcommand{\complex}{\mathbb{C}}
\newcommand{\utility}{\theta}
\newcommand{\control}{u_k}
\newcommand{\act}{a}
\newcommand{\ly}{Lyapunov }
\newcommand{\limset}{$D_{\infty}$}
\DeclareMathOperator{\vect}{vec}
\newcommand{\state}{s}
\newcommand{\belief}{\pi}
\newcommand{\posterior}{\alpha_k}
\newcommand{\joint}{\eta_k}
\newcommand{\DLsignal}{\beta_k}
\newcommand{\id}{\mathbb{I}}
\newcommand{\param}{\Phi}
\newcommand{\parama}{\phi_1}
\newcommand{\paraml}{\phi_2}
\newcommand{\paramp}{\phi_3}
\newcommand{\diag}{\text{diag}}
\newcommand{\reals}{{\rm I\hspace{-.07cm}R}}
\def\BibTeX{{\rm B\kern-.05em{\sc i\kern-.025em b}\kern-.08em
    T\kern-.1667em\lower.7ex\hbox{E}\kern-.125emX}}
\begin{document}
%
\title{\LARGE \bf
\ly based  Stochastic Stability of a Quantum Decision System for Human-Machine Interaction
}
%
%
%

\author{Luke~Snow,~\IEEEmembership{}
        Shashwat~Jain~\IEEEmembership{}
        and~Vikram Krishnamurthy,~\IEEEmembership{Fellow,~IEEE}
\thanks{Partial results of this paper were submitted to IEEE CDC 2022.}
\thanks{This research was supported in part by the National Science Foundation grant CCF-2112457 and Army Research Office grant W911NF-19-1-0365}
\thanks{Luke Snow {\tt\small las474@cornell.edu}, Shashwat Jain {\tt\small sj474@cornell.edu}, Vikram Krishnamurthy {\tt\small vikramk@cornell.edu} are  with the School of Electrical and Computer Engineering, Cornell University, Ithaca, NY 14853, USA
        }%
\thanks{All three  authors contributed equally to this work.}}
\maketitle

\begin{abstract}
In mathematical psychology,  decision makers are modeled using the Lindbladian equations from quantum mechanics to capture important human-centric features such as  order effects and violation of the sure thing principle. We consider   human-machine interaction involving a quantum decision maker (human)  and a controller (machine). Given a sequence of  human decisions over time,  how can the controller dynamically provide input messages to  adapt these decisions so as to  converge to a specific decision?  We show via  novel stochastic \ly arguments  how  the Lindbladian dynamics of the quantum decision maker can be controlled to converge to a specific decision asymptotically.  Our methodology yields a useful mathematical framework for human-sensor decision making. The stochastic \ly results are also of independent interest as they generalize recent results in the literature.
\end{abstract}

\begin{IEEEkeywords}
Stochastic \ly Stability, Supermartingales, Lindbladian Evolution, Quantum Decision Making
\end{IEEEkeywords}

%
\IEEEpeerreviewmaketitle

\section{Introduction}

Recent studies in mathematical psychology \cite{martinez2016quantum}, \cite{kvam2021temporal}, \cite{busemeyer2020application}, show that the Lindbladian equations from quantum mechanics facilitate modeling peculiar aspects  of human decision making. Specifically, such quantum decision models capture {\em order effects}
(humans  perceive $P(H|A\cap B) $ and $P(H|B \cap A)$ differently in decision making)
, violation of the {\em sure thing principle} (human perception of probabilities in decision making violates the total probability rule), and temporal oscillations in decision preferences. Motivated by the design of human-machine interaction systems, this paper addresses the following question: {\em Given a sequence of human decisions over time, how can a controller (machine)  adapt the Lindbladian dynamics  (of the human decision maker) so as to converge to a specific decision?} To investigate this, we develop a novel generalization of recent results involving finite-step stochastic \ly functions. Thus at an abstract level, we study the stochastic stability of a switched controlled Lindbladian dynamic system where the switching occurs due to the interaction of the controller (machine)  and decision maker (human) at specific (possibly random) time instants. 

\subsection{Decision Making Context}
Figure~\ref{fig:Model} shows our schematic setup. The finite-valued random variable $\state\sim \belief_0(\cdot)$ 
denotes the underlying state of nature, where $\belief_0$ is a known probability mass function.
The input signals $y_k$ and $z_k$ are noisy observations of the state with conditional observation densities $p(y|\state)$ and $p(z|\state)$, respectively. The observation $y_k$ is fed into a bayesian recommendor which recommends humans the most probable underlying state $\posterior$, thereby influencing the the human's psychological state $\rho$. This state is represented as a density operator in Hilbert Space, which evolves via the Lindbladian equation parametrized by the recommendation $\posterior$ and control input $\control$. The controller gets its own version of the knowledge $\DLsignal$ of the underlying state from the sensor which observes the underlying state as $z_k$. The controller detects the actions $\act_k$ and using the sensor input $\DLsignal$ it outputs a control signal to $\control$ to the human.

The density operator $\rho$ encodes a probability distribution over actions $\{a_j\}_{j \in \{1,\dots,m\}}$, and at each time point an action is taken according to this distribution. The machine observes the actions and outputs a feedback control signal $\control$ to the human.


\subsubsection*{Examples} Several
examples in robotics \cite{askarpour2019formal},  interactive marketing/advertising \cite{belanche2020consumer}, recommender systems \cite{lu2012recommender} and control of game-theoretic economic interactions \cite{9280374} exploit models for human decision making.
One example is a machine assisted healthcare system for patients with dementia \cite{HOEY2010503}, in which the patient is assisted by a machine (smart watch) to wash his hands. The machine's sensor detects whether a certain set of sequential actions are followed by the patient, and then sends those results to a controller which gives an audio/video command to the patient. 
In this context the underlying states ($s$) are the tap water, soap dispenser and towel dispenser, which are partially observed by both sensor and the patient. The patient has a psychological state ($\rho$) and the resultant hand washing actions ($a_k$) are detected by the controller which gives the control input ($\control$). In our work, we model the psychological state of the patient as a Lindbladian evolution as shown in Figure~\ref{fig:Model} since this accounts for a wider range of human behavior, such as irrational decisions which could be made by the dementia patient, than classical models.

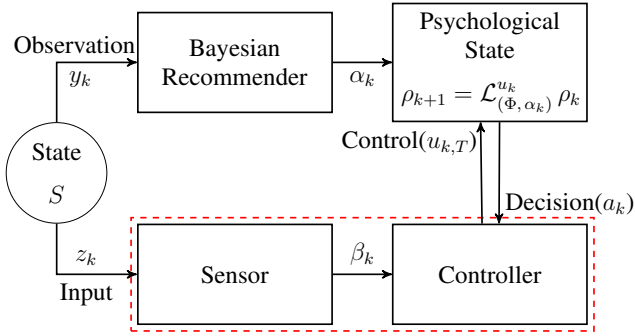
\begin{figure}[h!]
    \centering
    \resizebox{0.5\textwidth}{!}{%
  \begin{tikzpicture}[scale=1, auto, >=stealth']
	\large
	\matrix[ampersand replacement=\&, row sep=0mm, column sep=0.5cm] {
		
		\&\&\&\&\\
		\&\&\node[block] (F1) {Bayesian Recommender};\&\&\node[block] (S1) {Psychological State \[\rho_{k+1}=\mathcal{L}_{(\Phi,\,\posterior)}^{\control}\,\rho_k\]};\\
		\&\node[state] (X){State \[S\]};\&\&\&\\
		\&\&\node[block] (F2) {Sensor};\&\&\node[block] (f1) {Controller};
		\\
	};
	\draw [connector] (F2) --node[above]{$\DLsignal$} (f1);
	\draw[connector] (F1)--node[below]{$\posterior$}(S1);
	\draw [connector] ($(X.north) - (0cm, 0cm)$) -- ++(0,0cm)|- node [right=0.3cm,above]
	{Observation} node[right=0.4cm,below]{$y_k$} ($(F1.west) - (0cm, 0em)$);
	\draw [connector] ($(X.south) - (0cm, 0em)$)-- ++(0,0cm)|- node [right=0.5,below]
	{Input}node[right=0.5cm,above]{$z_k$} ($(F2.west) - (0cm, 0em)$); 
	\draw[dashed,thick, red] ($(F2.north west)+(-0.1,0.1)$) rectangle ($(f1.south east)+(0.1,-0.1)$);
	\draw[->,thick, shifted path=from f1 to S1 by 8pt]node[left=1.2cm, below]{Control($u_{k,T}$)};
    \draw[->, thick,shifted path=from S1 to f1 by 8pt]node[right=1.2,above]{Decision($\act_{k}$)};
\end{tikzpicture} 
  }
    \caption{Human-Machine Interaction Model. The state is observed by both Sensor and Bayesian Recommender which send a posterior to both Controller and Human. The Human and Machine interact via action and control feedback, where Human acts based upon upon by the control input given by the controller and the Bayesian Recommender. The machine controller uses input from sensor and actions of the human and gives a control feedback to human after $T$ time-steps. }
    \label{fig:Model}
\end{figure}
\subsection{Main Results and Organization}
    Given the described human-machine decision making system, the question we ask is: Can the human's decision preference be guided by the input control signals such that a desired target action is eventually taken at every time step? Our results reveal that this indeed is the case. We show this by developing a novel \ly stability result for a Lindbladian dynamic system.
    
    The main results and organization of the paper is as follows:
    \begin{enumerate}
        \item Sec.~\ref{sec2} introduces the open-quantum cognitive model of Martinez et. al. \cite{martinez2016quantum} and formulates a difference equation for the human psychological state. This time discretized form is amenable to the stochastic Lyapunov techniques presented in \cite{AMINI20132683}. Thus we form a bridge between the specific Lindbladian psychological structure of \cite{martinez2016quantum} and the more abstract quantum control formalisms of \cite{AMINI20132683}. However, in our formulation the machine interacts with the human decision maker over random time interval lengths $T$, and thus we need a more general stochastic Lyapunov argument, which we discuss in point (3) below.
        \item Sec.~\ref{sec3} presents Theorem~\ref{thm:convg}, which is our main result and shows the stochastic stability of our human-machine decision system. The proof uses the methodology of Amini et. al \cite{AMINI20132683}, along with \ly techniques and Theorem~\ref{thm:extension} which we provide in Sec. IV.
        \item Sec.~\ref{sec4} provides a generalization of a finite-step \ly stability result given in Qin et. al. \cite{qin2019lyapunov} in Theorem~\ref{thm:rand}, to the case when the finite-step interval $T$ is a random variable. Also Theorem~\ref{thm:extension} is a modified form of this result which is used to prove Theorem~\ref{thm:convg} in Sec.~\ref{sec3}.
    \end{enumerate}
    Thus we integrate a novel finite-step \ly stability result (developed in Sec. IV) with the quantum control formalism of \cite{AMINI20132683} to prove controlled convergence of a high-level quantum representation (in Sec. III). This directly corresponds (Sec. I) to the convergence of a specific Lindbladian structure which is a useful model of psychological preference evolution. The results presented develop a novel formulation of a machine controlled human decision maker.

\subsection{Literature Review}
Generative models for human decision making are studied extensively in behavioral economics and psychology.
 The classical  formalisms  of human decision making are the Expected Utility models of Von-Neumann and Morgenstern (1953)\cite{morgenstern1953theory} and Savage (1954) \cite{savage1951theory}. Despite the successes of these models, numerous experimental findings, most notably those of Kahneman and Tverksy \cite{kahneman1982judgment}, have demonstrated violations of the proposed decision making axioms. There have since been subsequent efforts to develop axiomatic systems which encompass wider ranges of human behavior, such as the Prospect Theory \cite{kahneman2013prospect}. However, given the complexity of human psychology and behavior it is no surprise that current models still have points of failure. The theory of Quantum Decision Making (\cite{busemeyer2012quantum}, \cite{khrennikov2010ubiquitous}, \cite{yukalov2010mathematical} and references therein) has emerged as a new paradigm which is capable of generalizing current models and accounting for certain violations of axiomatic assumptions. For example, it has been empirically shown that humans routinely violate Savage's 'Sure Thing Principle' \cite{khrennikov2009quantum}, \cite{aerts2011quantum}, which is equivalent to violation of the law of total probability, and that human decision making is affected by the order of presentation of information \cite{trueblood2011quantum} \cite{busemeyer2011quantum} ("order effects"). These violations are natural motivators for treating the decision making agent's mental state as a quantum state in Hilbert Space; The mathematics of quantum probability was developed as an explanation of observed self-interfering and non-commutative behaviors of physical systems, directly analogous to the findings which Quantum Decision Theory (QDT) aims to treat. 
 
 {\em Remark}. QDT  models in  psychology do not claim that the brain is acting as a quantum device in any physical sense. Instead QDT  serves as a {\em parsimonious generative blackbox model} for human decision making that is backed up by extensive experimental studies \cite{kvam2021temporal}, \cite{busemeyer2012quantum}.

Within Quantum Decision Theory, several recent advances have utilized quantum dynamical systems to model time-evolving decision preferences. The classical model for this type of time-evolving mental state is a Markovian model, but in \cite{busemeyer2009empirical} an alternative formulation based on Schr\"{o}dinger's Equation is developed. This model is shown to both reconcile observed violations of the law of total probability via quantum interference effects and model choice-induced preference changes via quantum projection. This is further advanced in \cite{asano2012quantum}, and  \cite{martinez2016quantum} where the mental state is modeled as an open-quantum system. This open-quantum system representation allows for a generalization of the widely used Markovian model of preference evolution, while maintaining these advantages of the quantum framework. Busemeyer et. al. \cite{kvam2021temporal} provide empirical analysis which supports the use of open-quantum models and conclude "An open system model that incorporates elements of both classical and quantum dynamics provides the 
best available single system account of these three characteristics—evolution, oscillation, and choice-induced 
preference change". 

To prove stability of this human-machine interaction, we generalize a recent result in finite-step \ly stability given in Qin et. al \cite{qin2019lyapunov}, which itself is a generalization of the work in \cite{aeyels1998new} from deterministic to stochastic systems.  \cite{qin2019lyapunov} aims to advance \ly theory for stochastic dynamical systems, in particular for network algorithms for distributed computation, and applies the developments to several distributed computation problems. They show how the results can be directly used to study problems in which one needs to prove convergence of inhomogeneous Markov chains, such as in distributed optimization \cite{nedic2014distributed} \cite{nedic2009distributed}, other distributed coordination algorithms \cite{tahbaz2009consensus} \cite{cao2008agreeing} \cite{tahbaz2008necessary}, and solving linear algebraic equations \cite{liu2017exponential} \cite{mou2015distributed}. Thus our finite-step stochastic \ly stability results can find application in these domains, but we leave this as future work.

\subsection*{Notation}
\begin{enumerate}
    \item $\ket{n} \in \mathcal{S}$ is $n^{th}$ basis vector in a Hilbert space $\mathcal{S}$
    \item $\mathcal{H}(\{\ket{\state_1}, \dots, \ket{\state_k}\})$: the Hilbert Space spanned by the orthonormal basis vector set $\{\ket {\state_1},\dots,\ket {\state_k}\}$
    \item $A^{\dagger}$: adjoint of $A$
    \item $\ket{n}\bra{m}$: outer product of $\ket{n}$ and $\ket{m}$
    \item Density operator $\rho$: $\rho=\ket{\Psi}\bra{\Psi}$ for some $\ket{\Psi}\in \CS$
    \item Trace of operator $A$: $\Tr(A)=\sum_{l=1}^{N}\bra{n}A\ket{n}$ for basis $\ket{n}\in\CS$
    \item  Random events are defined on $(\Omega,\mathcal{F},\PR)$.
    \item $\sigma(x_1,\dots,x_k)$: sigma algebra generated by $\{x_i\}_{i=1}^k$
\end{enumerate}
\section{Quantum Probability Model for Human Decision Making}
\label{sec2}
This  section presents the open-quantum system model that we will  use to represent the decision preference evolution of the human decision maker.
We define the evolution of the density operator of the decision maker using the open-system Quantum Lindbladian Equation, proposed in \cite{martinez2016quantum} and implemented in \cite{busemeyer2020application}. Reference~\cite{kvam2021temporal} provides empirical evidence which concludes that this open-system structure is the most parsimonious model which can capture observations of dynamical preference evolution such as oscillation and choice-induced preference change. We first illustrate the high-level psychological model in Subsection \ref{model}, the detailed mathematical and psychological derivation of this model is given in Appendix~\ref{constr_lind}. Sec. \ref{dt_deriv},  Sec.\ref{model_practicality} and Appendix \ref{Lindblad_param} present the derivation and construction of our Lindbladian psychological model and discuss why this model is useful. The reader may skip these sections and go directly to \ref{model} and then to Sec.~\ref{sec3} and Sec.~\ref{sec4} for the main results.

\subsection{Lindbladian Preference Evolution Model}
\label{model}


The psychological model consists of the following: 
\begin{enumerate}
\item  A state of nature $s \in \{1,\ldots,n\} = \CX$, with probability mass function $\pi_0(s)$
\label{2Afirst}
\item A state posterior $\posterior$ obtained by a Bayesian sensor from noisy observation $y_k \sim p(y_k | s)$ and prior $\alpha_{k-1}$. $\posterior$ parameterizes the human psychological evolution, as quantified in Appendix \ref{Lindblad_param}.
\item A state information signal $\DLsignal$ which depends on the noisy observation $z_k \sim p(z_k|s)$. For example, this could represent the output of a deep learning classifier. This is discussed further in Sec.~\ref{machine_control}.
\item  An action $a_k \in \{1,\ldots,m\} = \CA$ by the human at discrete time~$k$, for $k = 0,1,2,\ldots$. 
\item An objective state-action utility function 
\begin{equation}
\label{zeta}
    \zeta: \CX \times \CA \rightarrow \reals
\end{equation} 
which gives a real-valued utility to each unique state-action pair. We call this the \textit{objective} utility function because it will be augmented by several psychological parameters to produce a \textit{subjective} utility function eq. (\ref{subutil}), which will influence the human's action preference evolution, see Appendix \ref{Lindblad_param}.
\item  Scalar control input $\control \in [-\bar{u},\bar{u}],\, \bar{u} \in \reals_+$, generated by the machine every $T$ time steps, where $T$ is a random variable, $T: \Omega \rightarrow \mathbb{N}$. The control input $\control$ is piecewise constant for $T$ time-steps.  $T$ has a known probability mass function $\pi_T(\cdot)$. The input $\control$ controls the parameters of the psychological state evolution \eqref{M_update}, and thereby influences the human decision making. $\control$ models a recommendation signal (for example a posterior probability of the state of nature) by the machine to the human. These must satisfy the constraints in Sec. \ref{machine_control}.
\item The discrete time evolution of the psychological state
\begin{equation}
\label{M_update}
    \rho_{k+T} = \mathbb{M}_{a_k, T}^{\control}\,\rho_k := \frac{M_{a_k, T}^{\control}\, \rho_k\, M_{a_k, T}^{\control\dagger}}{\Tr(M_{a_k, T}^{\control}\, \rho_k\, M_{a_k, T}^{\control\dagger})}
\end{equation} 
where $k = 0,1,\ldots$ denotes discrete-time. 
  $a_k$ is a $T$-length sequence $\{a_{k_i}\}_{i = 1}^T $of random actions $a_{k_i}$ taking values in $\mathcal{A}$, and
$\rho_k$ is a complex-valued $d \times d$ matrix.
We derive $M_{a_k, T}^{\control}$ in the next subsection as a time dicretization of the Lindbladian ordinary differential equation from quantum mechanics.
\item The action probability distribution at  time $k$
\label{2Alast}
\begin{equation}
\label{eq:ProbEvo}
 \PR(a_k)=\Tr(M_{a_k}^{\control}\, \rho_k \, M_{a_k}^{\control\dagger}), a_k \in \{1,\cdots,m\}   
\end{equation}
In mathematical psychology, the distribution over actions \eqref{eq:ProbEvo} is
called the human's preference distribution~\cite{martinez2016quantum}. It can be observed that $\sum_{a_k}\Tr(M_{a_k}^{\control}\, \rho_k \, M_{a_k}^{\control\dagger})= 1$ and $\Tr(M_{a_k}^{\control}\, \rho_k \, M_{a_k}^{\control\dagger})\geq 0\ \forall a_k \in \{1,\dots,m\}$.
\end{enumerate}

Equation \eqref{M_update} represents the discrete-time psychological evolution of a human decision maker. It is derived from the time  discretization of the quantum mechanical Lindbladian differential equation, which quantifies the evolution of an open-quantum system; that is, a quantum system which interacts with a dissipative external environment. It has been shown in  \cite{busemeyer2020application} \cite{martinez2016quantum} that since this model incorporates aspects of both classical and quantum state evolution, it more accurately describes a wider range of psychological phenomena than classical or quantum models alone. 

 Note that the induced action choices are inherently probabilistic, with probabilities defined in \eqref{eq:ProbEvo}, rather than classically deterministic. The relevance of probabilistic decision making models, in particular quantum models, is discussed in \cite{pothos2009quantum}. The human-machine interactions also occur every $T$ time steps, with $T$ a random variable. This is useful because it guarantees convergence of sample paths even for randomly initialized human-machine interaction time scales. This could correspond to uncertainty in the decision/processing speed of a particular human (or between different humans) or of a particular machine. On a practical note, this model incorporates the randomness of human reaction times in decision making.
 
 We prove this convergence with random $T$ in the proof of Theorem~\ref{thm:convg}, and this uses our finite-step \ly result provided as Theorem~\ref{thm:extension}. An interesting technical note is that this proof cannot be extended to the case when $T$ is a nontrivial random process. Our proof induces process sub-sequences which each converge and exhaust the entire process sequence, therefore implying convergence of the sequence. If~$T$ is  time-evolving (e.g. a recurrent Markov chain) then the same argument will result in sub-sequences which do not exhaust the original sequence, and therefore overall convergence cannot be shown. However, $T$ \textit{can} be a random process on the positive integers which either evolves according to a Markov chain with an absorbing state or is non-increasing eventually with probability one. We do not provide proofs for these straightforward generalizations, they  may be useful in modeling evolving human-machine interaction time scales. 

\subsection{Discrete time representation of Lindblad Equation in terms of Kraus Operators}
\label{dt_deriv}
Our aim here is to derive \eqref{M_update} which is the discrete time version of \eqref{Lindblad}. This yields  the Lindbladian operator in terms of Kraus Operators used in Sec.\, \ref{sec3}. We will first define the Lindbladian operator and how it evolves. Then in Theorem I we prove a Lindbladian Operator can be reformulated in terms of Kraus Operators as done in \cite{Pearle_2012}.

Given a state of nature $s$ which is a random variable that can take on $n$ values, the human decision maker chooses one of $m$ possible actions. 
The quantum based model for human decision making is governed by the Lindbladian evolution of the psychological state. With $\mathcal{L}^u$ denoting the Lindbladian operator, the Lindbladian ordinary differential  equation for the dynamics of the psychological state $\rho_t$ over time $t \in [0,\infty)$  is specified as  
\begin{align}
\label{Lindblad}
&\frac{d\rho_t}{dt} = \mathcal{L}_{(\param,\posterior)}^u \, \rho_t, \quad \rho_0  = \frac{1}{nm}\diag(1,\cdots,1)_{mn\times mn} \\
     &\mathcal{L}_{(\param,\posterior)}^u \, \rho =-i(1-\parama)[H^{\control},\rho] \nonumber\\&+\parama \sum_{k,j}\gamma_{(k,j),(\paraml,\paramp)}^{\control}\left(L_{(k,j)}\,\rho\, L_{(k,j)}^{\dagger}-\frac{1}{2}\{L_{(k,j)}^{\dagger}L_{(k,j)},\rho\}\right)  \nonumber
\end{align}
Here $[A,B] = AB - BA$ is the commutator of $A$ and $B$, $\{A,B\} = AB + BA$ is the anti-commutator of $A$ and $B$, $A^*$ is the complex conjugate of $A$, and $k,j \in \{1,\dots,nm\}$. Also, $\param = (\parama,\paraml,\paramp) \in [0,1]\times[0,1]\times[0,\infty)$ is a vector of free parameters which determine the quantum decision maker's behavior, as discussed in Appendix \ref{constr_lind}. We  assume  that the machine has full knowledge of these behavioral parameters, as methods for estimating these via training are outside the scope of this paper. The terms $\param$, $L$, $\gamma$ and $H$ will be defined in Appendix \ref{constr_lind} in the context of an appropriate psychological model. Eq.~\eqref{Lindblad} is the canonical form of the Lindbladian operator for open-quantum systems.

The first term $[H^{\control},\rho]$ is known as the Von-Neumann equation and generalizes Schr\"{o}dinger's Equation from the pure state domain to the density operator domain.
The second term captures non-reversible Markovian dissipative effects of an interaction with an external environment, making this system "open". 
These two terms are weighted by the coefficient $\parama$, which interpolates between the purely quantum von Neumann evolution ($\parama = 0$) and the purely dissipative dynamics ($\parama = 1$).

 In order to prove convergence of the psychological state $\rho$ under an appropriate control policy, done in Sec.\, ~\ref{sec3}, we need to derive the discrete time representation  (\ref{M_update}) from the differential equation~\eqref{Lindblad}.
 
\begin{theorem}
\label{LindbladKraussProof}
The psychological state update (\ref{M_update}) is obtained as the first-order Euler time-discretization of the Lindbladian differential equation~\eqref{Lindblad}.
\end{theorem}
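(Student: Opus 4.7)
The plan is to build the discrete-time Kraus-form update \eqref{M_update} directly from \eqref{Lindblad} by a standard ``no-jump/jump'' decomposition of the Lindbladian generator, and then iterate the one-step map $T$ times to handle the random control horizon. First I would apply the first-order Euler scheme to \eqref{Lindblad} with step size $\Delta t$, obtaining
\begin{equation*}
\rho_{t+\Delta t} = \rho_t + \Delta t\,\mathcal{L}_{(\param,\posterior)}^{u}\rho_t + O(\Delta t^2),
\end{equation*}
and then rewrite the right-hand side as a completely-positive sum $\sum_{(k,j)} M_{(k,j)}\rho_t M_{(k,j)}^{\dagger}$ plus a ``no-jump'' piece $M_0\rho_t M_0^{\dagger}$ modulo $O(\Delta t^2)$. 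The natural choice (following Pearle \cite{Pearle_2012}) is
\begin{equation*}
M_0 = I + \Delta t\Bigl(-i(1-\parama)H^{u} - \tfrac{\parama}{2}\sum_{k,j}\gamma_{(k,j),(\paraml,\paramp)}^{u} L_{(k,j)}^{\dagger}L_{(k,j)}\Bigr),
\end{equation*}
together with $M_{(k,j)} = \sqrt{\parama\,\Delta t\,\gamma_{(k,j),(\paraml,\paramp)}^{u}}\,L_{(k,j)}$ for the jump channels. I would then expand $M_0\rho M_0^{\dagger} + \sum_{(k,j)} M_{(k,j)}\rho M_{(k,j)}^{\dagger}$, discard $O(\Delta t^2)$ contributions, and verify that exactly the commutator, the sandwich terms $L\rho L^{\dagger}$, and the anti-commutator terms of \eqref{Lindblad} reappear, thereby matching the Lindbladian generator term by term.

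Next I would check the quantum-instrument condition $\sum_{(k,j)} M_{(k,j)}^{\dagger}M_{(k,j)} + M_0^{\dagger}M_0 = I + O(\Delta t^2)$. Because the Hermitian part of $M_0$ exactly cancels $\tfrac{\parama}{2}\sum L^{\dagger}L$ to first order while the anti-Hermitian part drops out of $M_0^{\dagger}M_0$, this follows by direct expansion, ensuring that the Kraus operators define a valid measurement at order $\Delta t$. Interpreting each Kraus operator $M_{(k,j)}$ as the measurement outcome associated with action $a_k\in\CA$ indexes the $m$ possible actions by the multi-index $(k,j)$; the probability of observing action $a_k$ is then the denominator $\Tr(M_{a_k}^{u}\rho M_{a_k}^{u\dagger})$ in \eqref{eq:ProbEvo}, and the post-measurement state is the normalized sandwich in \eqref{M_update}, recovering the one-step conditional update.

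Finally, for the $T$-step evolution between machine interactions, I would iterate the single-step conditional map and collect the composition into a single effective Kraus operator $M_{a_k,T}^{u_k}$ corresponding to the $T$-length action sequence $\{a_{k_i}\}_{i=1}^T$, namely $M_{a_k,T}^{u_k} = M_{a_{k_T}}^{u_k}\cdots M_{a_{k_1}}^{u_k}$, and verify that the normalized outer product with $\rho_k$ reproduces \eqref{M_update}. The control input $u_k$ enters only through the fixed parameterization of $H^{u}$ and $\gamma^{u}$ across the window, which is permissible since $u_k$ is piecewise constant on intervals of length $T$.

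The main obstacle I anticipate is bookkeeping rather than conceptual: one must identify the correct index correspondence between the Lindbladian summation indices $(k,j)\in\{1,\dots,nm\}^2$ and the action labels $a_k\in\{1,\dots,m\}$, and verify that the $O(\Delta t^2)$ remainder terms can be absorbed consistently so that the instrument identity holds to the retained order. Once this indexing is fixed and the Hermitian-cancellation inside $M_0^{\dagger}M_0$ is verified, the theorem follows by direct comparison of the expanded Kraus sum with \eqref{Lindblad}.
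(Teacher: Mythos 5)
Your Euler step and the no-jump/jump Kraus decomposition are exactly the paper's opening moves: the paper's $K_{0,T}^{\control}$ and $K_{\nu\neq 0,T}^{\control}$ in Appendix~\ref{proof_lind} are precisely your $M_0$ and $M_{(k,j)}$ (up to the factor $T$ discussed below), and the instrument identity you check is the same cancellation the paper relies on. The genuine gap is in how you get from the Kraus sum to \eqref{M_update}: you identify the jump operators $M_{(k,j)}=\sqrt{\parama\,\Delta t\,\gamma^{\control}_{(k,j)}}\,L_{(k,j)}$ with the measurement outcomes for the $m$ actions. That identification fails. The $L_{(k,j)}=\ket{k}\bra{j}$ range over transitions between the $nm$ state--action nodes of the quantum random walk; they encode the dissipative part of the \emph{dynamics}, not an $m$-outcome measurement, and there is no index correspondence that collapses the $(k,j)$ pairs onto $\CA$. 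The paper instead keeps the decision as a separate quantum measurement: after the Lindbladian evolution $\sum_{\nu}K_{\nu,T}^{\control}\,\rho_k\,K_{\nu,T}^{\control\dagger}$, the chosen action projects the state onto the subspace $a\otimes\CX$ via a projector $P_{a_k}$, and the operator in \eqref{M_update} is defined as the composition $M_{a_k}^{\control}=P_{a_k}\bigl(\sum_{\nu}K_{\nu,T}^{\control}\bigr)$ in \eqref{M_def}, with the action probability \eqref{eq:ProbEvo} coming from the trace of the projected state. Without this separate projection step your construction does not produce the operator appearing in \eqref{M_update} or the preference distribution \eqref{eq:ProbEvo}.

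A secondary divergence: you compose $T$ one-step maps, $M_{a_k,T}^{\control}=M_{a_{k_T}}^{\control}\cdots M_{a_{k_1}}^{\control}$, whereas the paper performs a \emph{single} Euler step of effective size $T\Delta t$ (note the factor $T$ inside $K_{0,T}^{\control}$ and under the square root in $K_{\nu,T}^{\control}$) and applies the action projection once per interaction window. These two constructions agree only to first order and yield different operators in general; since the theorem asserts that \eqref{M_update} \emph{is} the first-order discretization the paper uses, you would need to adopt the single-step-of-size-$T\Delta t$ convention (or argue the equivalence to the retained order) to land on the stated form.
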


\begin{proof}
 See Appendix \ref{proof_lind}
\end{proof}

\subsection{Practicality in Modeling Human Decision Making}
\label{model_practicality}
The Lindbladian model \eqref{Lindblad}
captures important human decision features such as violation of the sure-thing principle, temporal preference oscillations and order effects, which we now describe.
These features cannot be explained by purely Markovian models in a parsimonious way \cite{busemeyer2009empirical}.

\subsubsection*{The violation of the sure-thing principle}
The total probability law, also called the Sure Thing Principle (STP), is  
\[P(A)=P(B)\,P(A|B)+P(\bar{B})\,P(A|\bar B)\]
for events $A$ and $B$. Violation occurs when $=$ is replaced by either $<$ or 
$>$.
Pothos and Busemeyer \cite{pothos2009quantum} (see also \cite{khrennikov2009quantum}) review  empirical evidence for the violation of STP and show how quantum models can account for it by introducing quantum interference in the probability evolution. Here we demonstrate that the Lindbladian model in Appendix \ref{Lindblad_param} can model these violations.

    Consider the simple case comprising  two actions $\act_1, \act_2$ and two states $\CE_1,\CE_2$. Suppose $\Gamma$ is a non-degenerate posterior belief (strictly in the interior of the unit simplex) of the underlying state. The violation of the sure thing principle occurs when $P(\act_2|\Gamma)$ is not a convex combination of $P(\act_2|\CE_1)$ and $P(\act_2|\CE_2)$, i.e. 
    \[P(\act_2|\Gamma) \neq \epsilon\,P(\act_2|\CE_1) + (1-\epsilon)\,P(\act_2|\,\CE_2) \ \forall\, \epsilon \in (0,1)\]
     Equivalently, if $P(\act_2|\Gamma) > \max\{P(\act_2|\CE_1),P(\act_2|\CE_2)\}$ or $P(\act_2|\Gamma) < \min\{P(\act_2|\CE_1),P(\act_2|\CE_2)\}$ then the sure thing principle has been violated. This latter form is explicitly demonstrated in Fig.~\ref{fig:STP_viol_surf}: The probabilities of $\act_2$ $P(\act_2|\CE_1)$ (green surface) $P(\act_2|\CE_2)$ (blue surface), and $P(\act_2|\Gamma)$ (red surface) are plotted as functions of free parameters $\parama$ and $\paramp$, with $\paraml = 10$ fixed. The regions of the plot where the red surface is outside the space in between the blue and green surfaces correspond to parameterizations under which the sure thing principle has been violated. Fig.~\ref{fig:STP_viol_plot} shows a cross-section of Fig.~\ref{fig:STP_viol_surf} at $\parama = 0.29$ and explicitly shows the violation of total probability for $\paramp > 0.88$.
    \begin{figure}[h!]
    \centering
    \includegraphics[width=1\linewidth]{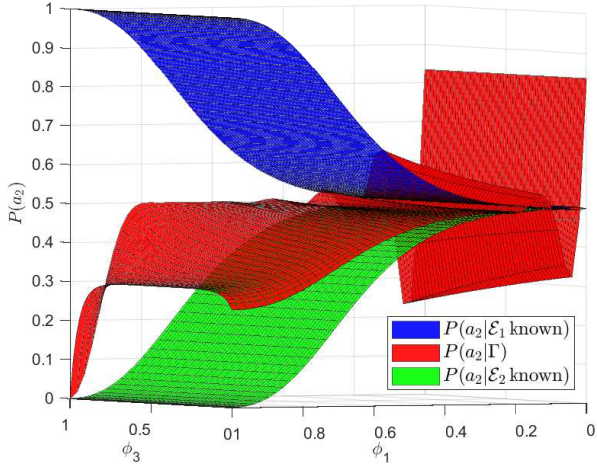}
    \caption{Quantum decision making facilitates modeling human subjective probabilities that violate the total probability law (sure thing principle). The surface for $P(\act_2|\,\Gamma)$, denoted in red crosses the surfaces of $P(a_2|\CE_1 \,\text{known})$ denoted in blue and $P(a_2|\CE_2\,\text{known})$ denoted in green. The law is not violated if the red surface lies between the green and the blue surface.}
    \label{fig:STP_viol_surf}
\end{figure}
\begin{figure}[h!]
    \centering
    \includegraphics[width=1\linewidth]{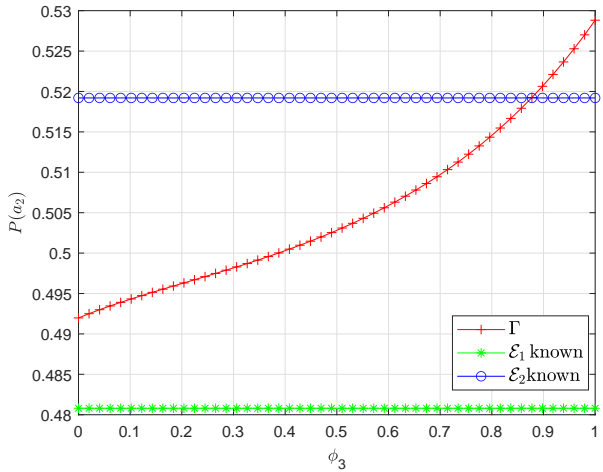}
    \caption{Violation of the law of total probability; cross-sectional view. This is a cross-section of the Fig.\ref{fig:STP_viol_surf} taken at $\parama=0.2857$. The violation occurs for $\phi_3 \in [0.87,1].$}
    \label{fig:STP_viol_plot}
\end{figure}

\subsubsection*{Oscillation in preference evolution}
Fig.~\ref{fig:time_evol} illustrates a sample path of the action preference distribution evolution governed by \eqref{M_update}. This sample path in particular demonstrates two essential features of the Lindbladian model:
\begin{itemize}
    \item Preferences which oscillate over time can be modeled. These oscillations in temporal preference evolution are empirically observed in \cite{kvam2021temporal} and cannot be modeled by classical Markovian models.
    \item Since the Lindbladian model includes dissipative (Markovian) dynamics, it does not oscillate forever; it converges to a steady-state distribution. This steady-state distribution is commonly used in the literature \cite{busemeyer2020application} \cite{martinez2016quantum}; Here we use it to generate the probability data in Figs. 2 and 3 which reveals violations of the sure thing principle.
\end{itemize}
\begin{figure}[h!]
    \centering
    \includegraphics[width=1\linewidth]{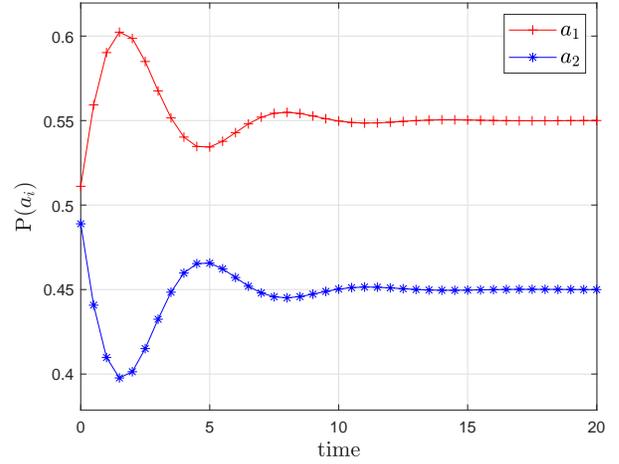}
    \caption{Time evolution of action probabilities exhibits oscillatory behavior. This behaviour cannot be captured  by the classical Markovian model, which only exhibits monotonic behavior. Here the parameters are $\parama=0.5$, $\paraml=10$ and $\paramp=0.5$. }
    \label{fig:time_evol}
\end{figure}

\section{Machine Control of Human Decision Maker}
\label{sec3}

The purpose of this section is to design a stochastic Lyapunov-based controller. By using this controller, our main result (Theorem~\ref{thm:convg}) shows that regardless of the initial psychological state of the human, the machine is  able to control the human preference in such a way that the target action is eventually chosen at every time step with probability one. Theorem~\ref{thm:convg} generalizes the \ly formulation in \cite{AMINI20132683} to demonstrate almost sure convergence when the machine interacts with the human over random time intervals $T$. 

We first define some notation:
With $ d =nm$, let $D$ denote the space of non-negative Hermitian matrices with trace~1:
\begin{equation}
\label{Dset}
D := \{\rho \in \complex^{d \times d} : \rho = \rho^{\dagger}, \Tr(\rho) = 1, \rho \geq 0\}
\end{equation}
Let $\{\ket{b_r}\}_{r=1}^d$ be a set of orthonormal vectors in $\complex^d$ (isomorphic to $\CH(\CA) \otimes \CH(\CX)$), where each $\ket{b_r}$ corresponds to a unique state-action pair. Let $\CS$ be the Hilbert space spanned by the orthonormal basis $\{\ket{b_r}\}_{r=1}^d$. Thus for each $a \in \CA$ there exists a unique subset $\{\ket{b_{i_1}},\dots,\ket{b_{i_n}}\} = a \otimes \CX$, and for each $s \in \CX$ there exists a unique subset $\{\ket{b_{r_1}},\dots,\ket{b_{r_m}}\} = \CA \otimes s$
\begin{definition}
\label{targ_def}
$\ket{\bar{b}_r} \in \{\ket{b_r}\}_{r=1}^d$ is the target state which is chosen by the machine and to which the human is guided via the control input
\end{definition}
The difference between target \textit{action} and target \textit{state} should be clarified. The target action $\act^{*} \in \CA$ lies in the action space and corresponds to one particular decision the human can make. The target \textit{state} $\ket{\bar{b}_r} \in \act^{*} \otimes \CX$ is one particular state in the Hilbert space $\CS$ consisting of both the action space $\CA$ and the state space $\CX$. We will show that the psychological state can be guided to any target \textit{state}, in particular some $\ket{\bar{b}_r} \in \act^{*} \otimes \CX$, which is equivalent to convergence to the target action $\act^{*}$.

The term {\em 'Open-loop (super) martingale'} below denotes a (super) martingale when  the control input $\control = 0$ for $k = 0,1,\ldots$. 

\subsection{Machine control}
\label{machine_control}
Recall from Sec. \ref{model} that the machine interacts with the human via a control signal $\control$.
We now introduce additional assumptions on this machine controller:
\begin{enumerate}
    \item
    {Control Input Constraints}\\
    We need the following  assumptions which constrain the influence of control $u_k$ on the psychological evolution.
\label{ap:contr}
\begin{enumerate}
    \item For each $\control, \sum_{\act_k}M_{\act_k}^{\control \dagger}\, M_{\act_k}^{\control} = \id$ for actions $\act_k$.
    \item For $\control = 0$, the $d\times\,d$ matrices $M_{\act_k}^0$ are diagonal: $ M_{\act_k}^0 = \sum_{n=1}^d c_{\act_k,n}\ket{n}\bra{n}, c_{\act_k,n} \in \complex$
    \item All $M_{\act_k}^{\control}$ are $\textrm{C}^2$ (continuously differentiable) functions of $\control$
\end{enumerate}
    \item 
    \label{3Alast}
    The control signal $\control$ is chosen according to\footnote{The minimum exists since $V_{\epsilon}(\cdot)$ is continuous and differentiable, from Sec. \ref{machine_control} 1.~(c), see~\cite{AMINI20132683}.}
    \begin{equation}
    \label{eq:Control_Input}
        \control := \underset{u \in [-1,1]}{\text{argmin}}\, \EX[V_{\epsilon}(\rho_{k+T}) | \rho_k, u, \DLsignal,\zeta]
    \end{equation}
    Here $V_{\epsilon}(\rho)$ will be  defined in (\ref{Ly_fcn}) and is parameterized by the set $\{\sigma_r\}_{r \in \{1,\dots,d\}}$ (see Appendix \ref{ap:sigma}). 
\end{enumerate}

{\em Discussion.}
The constraints (a)-(c) are needed as technical assumptions for the quantum control methods of \cite{AMINI20132683}, and thus for our proof of Theorem~\ref{thm:convg} to hold. For our purposes it suffices that $\control \in [-1,1]$, see \cite{AMINI20132683} for details. For our particular purposes, we are deriving the structure of $M_{\mu}^{\control}$ by the psychologically motivated Lindbladian structure given in \cite{martinez2016quantum}. So in order to utilize the stochastic Lyapunov stability methods of \cite{AMINI20132683}, we must adapt the Lindbladian structure (parameterization by $\control$) such that constraints (a)-(c) are satisfied.  Appendix \ref{Lindblad_param} discusses how this is accomplished. 

Constraint (a) is satisfied since the Lindbladian operator is a Kraus operator (see Sec.~\ref{dt_deriv}, Theorem~\ref{LindbladKraussProof}). Constraint (b) implies that Lindbladian operator is a diagonal operator at $\control=0$; it is shown by our formulation of the Lindbladian operator which is indeed a diagonal operator when $\control=0$, as given in Appendix \ref{Lindblad_param}. The constraint (c) relates to   the continuity of the Lindbladian operator with respect to $\control$ and is satisfied by our construction given in Appendix \ref{Lindblad_param}. This continuity property will be used to derive the parameters of the \ly function in Appendix \ref{ap:sigma}.  

We note that the choice of target state (Definition \ref{targ_def}) is arbitrary in the context of our main result, i.e. any particular target state can be chosen by the machine. For instance, we could have the machine determine the target state $\ket{\bar{b}_r} \in a^{*} \otimes \CX$, with corresponding action $a^{*}$ as that which optimizes the Bayesian expected objective utility:
\begin{equation}
    \label{contr_EUT}
        a^{*} = \textrm{argmax}_{a \in \{1,\dots,m\}} \EX[\zeta(a,\CE) | \DLsignal]
    \end{equation}
    where $\zeta$ is the objective state-action utility function. As the human has a utility which is augmented by subjective biases (see Appendix \ref{Lindblad_param}) and a psychological evolution which is inherently probabilistic, (\ref{contr_EUT}) would essentially improve the human's decision making such that it is Bayesian optimal. However, note that this particular choice of \textit{which} target action to guide the human to does not factor into the proof of our result that the human can be guided to \textit{any} chosen target action eventually.

 \subsection{Main Result 1. Convergence of Psychological State}
 \label{Sec:Qly}
 Here we prove the convergence of the controlled psychological state to a desired target action. 

\begin{theorem}
\label{thm:convg}
    Consider the discrete time density operator evolution (\ref{M_update}) and any target state $\ket{\bar{b}_r}$ (see Definition~\ref{targ_def}). Then, there exists a control sequence $\{\control\}_{k \in \mathbb{N}}$ generated by the machine such that the human psychological state $\rho_k$ converges to $\ket{\bar{b}_r}\bra{\bar{b}_r}$ with probability one for any initial $\rho_0 \in D$, where $D$ is defined in \eqref{Dset}. \\
\end{theorem}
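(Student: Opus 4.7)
The plan is to adapt the stochastic Lyapunov methodology of Amini et.\ al.\ \cite{AMINI20132683} to our setting, where the key new feature is that the machine updates every $T$ steps with $T$ random. First I would construct the Lyapunov function explicitly in the form
\begin{equation*}
V_{\epsilon}(\rho) \;=\; \sum_{r=1}^{d} \sigma_r \bra{b_r}\rho\ket{b_r} \;+\; \epsilon\, Q(\rho),
\end{equation*}
where the weights $\{\sigma_r\}$ (constructed in Appendix~\ref{ap:sigma}) are chosen so that $\sigma_{\bar r}$ is strictly smaller than $\sigma_r$ for every $r\neq \bar r$, and $Q(\rho)$ is a strict off-diagonal penalty (e.g.\ $\rho\mapsto 1-\Tr(\rho^2)$ or a coherence term) that vanishes exactly on the basis states $\ket{b_r}\bra{b_r}$. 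Then $V_\epsilon$ is continuous, non-negative, and attains its minimum uniquely at $\rho=\ket{\bar b_r}\bra{\bar b_r}$, as required.

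Next I would establish the open-loop (super)martingale structure. Using constraint (a) from Sec.~\ref{machine_control}, the map $\rho\mapsto \sum_{a}M_a^{u\dagger}\,\rho\,M_a^{u}$ is unital and trace-preserving, so $\mathbb{E}[\bra{b_r}\rho_{k+T}\ket{b_r}\mid \rho_k, u_k{=}0, T]= \bra{b_r}\rho_k\ket{b_r}$ using constraint (b) (diagonal $M_a^0$ preserves basis populations as martingales). Hence the linear part of $V_\epsilon$ is an open-loop martingale for each realisation of $T$, and the off-diagonal penalty decays under the dissipative part, yielding an open-loop supermartingale when $\epsilon>0$. The $\mathrm{C}^2$-dependence in constraint (c) then allows a second-order Taylor expansion in $u$ around $u=0$, showing that the minimiser in \eqref{eq:Control_Input} produces $\mathbb{E}[V_\epsilon(\rho_{k+T})\mid \rho_k,u_k,\beta_k,\zeta] \le V_\epsilon(\rho_k)$, so $V_\epsilon(\rho_k)$ is a supermartingale along the sub-sequence of machine-interaction instants.

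The third step is the convergence argument. Because $T$ is a fixed random variable (not a time-varying process) with known pmf $\pi_T$, I would condition on $T=\tau$ to extract, for every $\tau$ in the support of $\pi_T$, a deterministic-$\tau$ sub-sequence $\{\rho_{k\tau}\}$ to which the finite-step stochastic Lyapunov result Theorem~\ref{thm:extension} applies. Doob's convergence theorem together with Theorem~\ref{thm:extension} then gives that $V_\epsilon(\rho_k)\to V_\infty$ a.s.\ along each such sub-sequence, and a LaSalle-type argument on the invariant set $\{\rho: \mathbb{E}[V_\epsilon(\rho_{k+T})\mid \rho]=V_\epsilon(\rho)\}$ restricts the $\omega$-limit set to the basis states $\{\ket{b_r}\bra{b_r}\}_{r=1}^d$. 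Finally, the $\epsilon$-perturbation must be used to eliminate every non-target basis state as a stable limit: near each $\ket{b_r}\bra{b_r}$ with $r\neq\bar r$, one shows that the control chosen by \eqref{eq:Control_Input} produces a strictly positive probability of decreasing $V_\epsilon$ by a fixed amount, so by a standard Borel--Cantelli / transience argument (Prop.~3.2 of \cite{AMINI20132683}) $\rho_k$ cannot remain in a neighbourhood of any $\ket{b_r}\bra{b_r}$ with $r\neq \bar r$, forcing almost-sure convergence to $\ket{\bar b_r}\bra{\bar b_r}$. Since the sub-sequences indexed by $\tau$ in $\mathrm{supp}(\pi_T)$ jointly exhaust the full process, convergence on each yields convergence of $\rho_k$ itself.

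The main obstacle will be the interplay between the random interval $T$ and the supermartingale increments: the one-step drift inequality in \cite{AMINI20132683} is replaced by a random-horizon drift, and I must ensure that the minimisation in \eqref{eq:Control_Input} is taken in expectation over $T$ while still giving a uniform strict decrease near the undesired equilibria. This is precisely what Theorem~\ref{thm:extension} is designed for, and the reason the proof does not extend to $T$ being a non-trivial random \emph{process} is that the sub-sequence decomposition above would no longer exhaust the trajectory, as already noted after \eqref{eq:ProbEvo}.
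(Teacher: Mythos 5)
Your overall architecture matches the paper's: a Lyapunov function built from weighted basis populations plus an $\epsilon$-perturbation, an open-loop supermartingale property, the argmin control \eqref{eq:Control_Input}, a Kushner/LaSalle invariance argument locating the limit set, the Amini et al.\ weight construction to exclude non-target basis states, and the random-$T$ finite-step result (Theorem~\ref{thm:extension}) to pass from the $T$-step subsequence to the full sequence. However, two load-bearing steps are unjustified as you state them. The first is the $\epsilon$-perturbation. The paper takes $V_\epsilon(\rho)=\sum_r\sigma_r\bra{b_r}\rho\ket{b_r}-\tfrac{\epsilon}{2}\sum_r\bra{b_r}\rho\ket{b_r}^2$, a \emph{concave} function of the quantities $\bra{b_r}\rho\ket{b_r}$, each of which is an open-loop martingale (Appendix~\ref{ap:martingale}); the open-loop supermartingale property is then immediate from Jensen's inequality. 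Your choice $V_\epsilon=\sum_r\sigma_r\bra{b_r}\rho\ket{b_r}+\epsilon\,Q(\rho)$ with $Q(\rho)=1-\Tr(\rho^2)$ or a coherence term is not a function of the populations alone, and the assertion that ``the off-diagonal penalty decays under the dissipative part'' is precisely the claim you would have to prove: it does not follow from constraints (a)--(c) and is not true in general for the normalized, measurement-conditioned update \eqref{eq:dens_up}. Without it you have no open-loop supermartingale and the subsequent Doob/Kushner machinery does not apply.

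The second gap is the condition on the weights. Requiring only that $\sigma_{\bar r}$ be strictly smaller than every other $\sigma_r$ is not sufficient. The exclusion of non-target basis states from the limit set rests on Lemma~\ref{ap:lem:sig2}: $\boldsymbol\sigma$ must solve $R\boldsymbol\sigma=\boldsymbol\lambda$ with $\lambda_n<0$ for $n\neq\bar n$, which forces $u\mapsto\EX[V_\epsilon(\rho_{k+T})\mid\rho_k=\ket{b_r}\bra{b_r},\control=u]$ to have a strict local \emph{maximum} at $u=0$ for $r\neq\bar r$ and a strict local minimum at $u=0$ for $r=\bar r$. It is this second-derivative sign pattern, not an ordering of the weights, that guarantees the argmin control finds a strictly decreasing direction at every undesired equilibrium, so that the drift is strictly negative there and the limit set reduces to $\ket{\bar b_r}\bra{\bar b_r}$ (Lemma~\ref{limset_restr}). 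Your Borel--Cantelli/transience route could in principle replace the paper's direct limit-set argument, but it still needs exactly this quantitative strict decrease at the undesired states, which your choice of $\boldsymbol\sigma$ does not deliver. The remainder of your outline (conditioning on $T=\tau$, exhausting the trajectory by subsequences, and the remark on why $T$ cannot be a nontrivial random process) is consistent with the mechanics of Theorems~\ref{thm:rand} and~\ref{thm:extension}.
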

\textit{Remark}: The convergence to any psychological state $\ket{\bar{b}_r}\bra{\bar{b}_r}, \ket{\bar{b}_r} \in a \otimes \CX$, implies the convergence to the corresponding action $a \in \CA$. Since $\ \forall a \in \CA$ there exists a set $\{\ket{b_{1}},\dots,\ket{b_{n}}\} = a \otimes \CX$ (see discussion directly above Definition \ref{targ_def}), Theorem~\ref{thm:convg} implies that the machine can guide the human to choose any specific target action $a \in \CA$ asymptotically.\\
\begin{proof}
First with $\joint = \{a_k, T\}$, rewrite (\ref{M_update}) as 
\begin{equation}
\label{eq:dens_up}
    \rho_{k+T} = \mathbb{M}_{\joint}^{\control}\rho_k = \frac{M_{\joint}^{\control}\, \rho_k\, M_{\joint}^{\control \dagger}}{\Tr(M_{\joint}^{\control}\, \rho_k\, M_{\joint}^{\control \dagger})}
\end{equation}
 We define the following \ly function, which forms a supermartingale under both open-loop (zero-input) and closed-loop (feedback control ($\control$)) conditions for the process (\ref{eq:dens_up}):
\begin{equation}
\label{Ly_fcn}
    V_{\epsilon}(\rho):= \sum_{r=1}^d \sigma_r \bra{b_r} \rho \ket{b_r} - \frac{\epsilon}{2}\sum_{r=1}^d \bra{b_r} \rho \ket{b_r}^2
\end{equation}
where $\sigma_r$ is non-negative $\forall r \in \{1,\ldots,d\}$ and $\epsilon$ is strictly positive. The set $\{\sigma_r\}_{r=1}^d$ and $\epsilon$ are chosen according to \cite{AMINI20132683} such that the \ly function $V_{\epsilon}(\rho_k)>0 \, \forall\, \rho_k \in D$ and $\rho_k$ converges to the intended subspace $\ket{\bar{b}_r} \bra{\bar{b}_r}$ with probability 1. By \cite{AMINI20132683} and \cite{6160433}, $\bra{b_r} \rho \ket{b_r}$ is an open-loop martingale given the $T$-step density operator evolution (\ref{eq:dens_up}) (see Appendix~\ref{ap:martingale} for proof). $V_{\epsilon}$ is a concave function of the open-loop martingales $\bra{b_r} \rho \ket{b_r}$ and therefore is an open-loop ($\control = 0$) supermartingale given the process (\ref{eq:dens_up}). 
\begin{equation*}
    \EX[V_{\epsilon}(\rho_{k+T}) |\, \rho_k, \control = 0] - V_{\epsilon}(\rho_k) \leq 0 
\end{equation*}
The following feedback control mechanism is used
\begin{equation}
\label{pf_contr}
    \control := \underset{u \in [-1,1]}{\text{argmin}}\, \EX[V_{\epsilon}(\rho_{k+T}) | \rho_k, u]
\end{equation}
to get $\EX[V_{\epsilon}(\rho_{k+T}) | \rho_k, \control] \leq \EX[V_{\epsilon}(\rho_{k+T}) | \rho_k, u = 0]$.
Here the expectation is taken with respect to $\joint$,
\begin{equation*}
\begin{split}
     \EX[V_{\epsilon}(\rho_{k+T}) | \rho_k, u]  &=
    \EX[V_{\epsilon}(\mathbb{M}_{\joint}^{u} \rho_k)]
=  \int_{\Omega} V_{\epsilon} (\mathbb{M}_{\joint(\omega)}^u \rho_k)d\omega 
\end{split}
\end{equation*}
where $\Omega$ is the sample space under which the process is induced.
Define 
\begin{equation*}
\begin{split}
    \tilde{Q}(\rho_k) := \EX[V_{\epsilon}(\rho_{k+T}) | \rho_k, \control] - V_{\epsilon}(\rho_k) \\ \leq \EX[V_{\epsilon}(\rho_{k+T}) | \rho_k, \control = 0] - V_{\epsilon}(\rho_k) \leq 0
\end{split}
\end{equation*}
    
$V_{\epsilon}(\rho)$ is a continuous function, so using \cite[Chapter 8]{kushner1971introduction}, so we have the $T$-step control sequence $\{\rho_{k+iT}\}_{i \in \mathbb{N}}$ that converges to the set $\textrm{\limset}:= \{\rho : \tilde{Q}(\rho) = 0\}$ with probability one.

By Lemma~\ref{limset_restr}, presented immediately after this proof, the set \limset \ is restricted to our desired state $\{\ket{\bar{b}_r} \bra{\bar{b}_r}\}$. So now we must show that the entire sequence $\{\rho_k\}_{k \in \mathbb{N}}$ converges to this set. 

We now show that $\PR[\lim_{k \to \infty}\rho_k = \ket{\bar{b_r}}\bra{\bar{b_r}}] = 1$, i.e. the entire sequence converges to the target state with probability one. This utilizes Theorem~\ref{thm:extension} which is developed in Sec. \ref{sec4}.

The statement of Lemma~\ref{limset_restr} can be rewritten as 
\begin{equation*}
\begin{split}
&\PR[\,\lim_{i \to \infty} \rho_{k + iT} = \ket{\bar{b}_r}\bra{\bar{b}_r}] = \\
&\PR[\,\exists\,K \in \mathbb{N} : \rho_{k + iT} = \ket{\bar{b}_r}\bra{\bar{b}_r} \ \forall i \geq K]=1    
\end{split}
\end{equation*}
where the choice of $\{\sigma_r\}_{r=1}^d$, from Lemma 2 of \cite{AMINI20132683} and our proof of Lemma~\ref{limset_restr}, ensures that 
\[\EX[V_{\epsilon}(\rho_{k+T}) \ | \rho_k = \ket{b_r}\bra{b_r}, \control = u]\] has a strict local minimum at $u=0$ for $r = \bar{r}$, so
\begin{equation*}
\begin{split}
    u_{k+iT}= \underset{u \in [-1,1]}{\text{argmin}}\, \{\EX[V_{\epsilon}(\rho_{k+(i+1)T}) | \rho_{k+iT}, u]\} = 0 \, \forall\, i \geq K
\end{split}
\end{equation*}
Now, for $q \in \{1,\dots,T\}$, 
\begin{equation*}
\begin{split}
    &\EX[V_{\epsilon}(\rho_{k+(i+q+1)T}) | \rho_{k+(i+q)T}, u=0] \\&=  \EX[V_{\epsilon}(\mathbb{M}_{\joint}^{0}\rho_{k+(i+q)T})]\leq V_{\epsilon}(\rho_{k+(i+q)T})
\end{split}
\end{equation*}
since $V_{\epsilon}$ is an open-loop supermartingale. This implies $\exists\, K \in \mathbb{N}$ such that 
\[\EX[V_{\epsilon}(\rho_{k+T}) | \rho_k, \control] \leq V_{\epsilon}(\rho_k) \ \forall\, k \geq K\] 
Since there exists a unique mapping from elements of $D$ to elements of $\reals^{2d^2}$, applying Theorem~\ref{thm:extension} yields $\PR[\lim_{k \to \infty} \rho_k = \ket{\bar{b}_r}\bra{\bar{b}_r}] = 1$, thus proving the Theorem.

\end{proof}

\begin{lemma}
\label{limset_restr}
For the evolution \eqref{eq:dens_up}, target state $\ket{\bar{b}_r}$ and feedback control \eqref{pf_contr}, we have
$\PR[\,\lim_{i \to \infty} \rho_{k + iT} = \ket{\bar{b}_r}\bra{\bar{b}_r}] = 1$
\end{lemma}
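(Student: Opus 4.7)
The plan is to establish the lemma in two stages: first pin down the set of possible limit points of the $T$-step subsequence $\{\rho_{k+iT}\}_{i\in\BN}$ that lies in $D_\infty = \{\rho : \tilde Q(\rho) = 0\}$, and then rule out every candidate other than the target $\ket{\bar b_r}\bra{\bar b_r}$. The invariance argument already invoked in the enclosing proof (via Chapter~8 of \cite{kushner1971introduction}) delivers the set convergence $\rho_{k+iT}\to D_\infty$ almost surely, so the task reduces to showing $D_\infty = \{\ket{\bar b_r}\bra{\bar b_r}\}$.

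For the first stage I would exploit Constraint~(b) of Section~\ref{machine_control}: each open-loop Kraus operator $M_{a_k}^0$ is diagonal in $\{\ket{b_r}\}_{r=1}^d$, so the scalar diagonal entries $\bra{b_r}\rho\ket{b_r}$ are open-loop martingales and the quadratic correction $-\tfrac{\epsilon}{2}\sum_r\bra{b_r}\rho\ket{b_r}^2$ appearing in $V_\epsilon$ is strictly concave in these entries. Consequently the equality $\EX[V_\epsilon(\rho_{k+T})\mid \rho_k, u{=}0] = V_\epsilon(\rho_k)$ forces every diagonal entry to be almost surely preserved by the open-loop channel, which combined with the completeness identity in Constraint~(a) confines $\rho$ to a single basis ray $\ket{b_r}\bra{b_r}$. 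Since $\tilde Q(\rho)\leq \EX[V_\epsilon(\rho_{k+T})\mid \rho_k,u{=}0]-V_\epsilon(\rho_k)$ by the argmin definition of $\control$, the same conclusion transfers to the closed loop, yielding $D_\infty \subseteq \{\ket{b_r}\bra{b_r}\}_{r=1}^d$.

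For the second stage I would invoke the weight selection $\{\sigma_r\}$ of Appendix~\ref{ap:sigma}, built in the spirit of Lemma~2 of \cite{AMINI20132683}. This selection is tuned so that the map $u\mapsto \EX[V_\epsilon(\rho_{k+T})\mid \rho_k = \ket{b_r}\bra{b_r},u]$ has a strict local minimum at $u=0$ only when $r=\bar r$, and has a strict local maximum at $u=0$ for every $r\neq \bar r$. The $C^2$ regularity in $u$ from Constraint~(c) permits a Taylor expansion around each non-target projector, producing a neighborhood on which the feedback \eqref{pf_contr} returns some $u\neq 0$ that strictly decreases the one-step conditional expectation of $V_\epsilon$ by a positive amount. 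A repulsion (anti-stability) argument, structurally identical to Theorem~2 of \cite{AMINI20132683}, then gives $\PR[\lim_{i\to\infty}\rho_{k+iT} = \ket{b_r}\bra{b_r}]=0$ for each $r\neq \bar r$. Combined with the first stage, this leaves only $\ket{\bar b_r}\bra{\bar b_r}$ as an admissible limit.

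The main obstacle is carrying the repulsion argument through the random dwell time $T$: one must verify that the strict one-step decrease at a non-target fixed point remains uniformly bounded below zero after averaging over $T\sim \pi_T$. Because $\pi_T$ is a fixed law on $\BN$ and $V_\epsilon$ is $C^2$ jointly in $\rho$ and $u$, the uniform bound follows by applying the deterministic-$T$ estimate pointwise on the support of $\pi_T$ and taking the expectation with dominated convergence; so the difficulty is organisational rather than substantively new. Once this is in hand the lemma follows, and the argmin definition of $\control$ together with the $C^2$ regularity guarantees that the local minimum at the target is genuinely attained so that $u_k\to 0$ on the target, closing the loop with the argument in the enclosing proof.
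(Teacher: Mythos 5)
Your proposal is correct and follows essentially the same route as the paper: both arguments reduce the lemma to showing that the limit set $D_{\infty}=\{\rho:\tilde{Q}(\rho)=0\}$ from the Kushner invariance step equals $\{\ket{\bar{b}_r}\bra{\bar{b}_r}\}$, using the open-loop supermartingale equality at $u=0$ together with the weight selection of Lemma~2 of \cite{AMINI20132683}, which makes $u\mapsto\EX[V_{\epsilon}(\rho_{k+T})\,|\,\rho_k=\ket{b_r}\bra{b_r},u]$ a strict local maximum at $u=0$ for $r\neq\bar{r}$ so that the argmin feedback excludes every non-target basis projector. Your first stage (variance/Jensen argument confining $D_{\infty}$ to basis rays, which tacitly needs the nondegeneracy of the $|c_{a,r}|^2$ from \cite{AMINI20132683}) is spelled out more explicitly than in the paper, which defers that step entirely to the citation, and the separate repulsion argument you invoke is not needed once $D_{\infty}$ has been identified.
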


\begin{proof}
The proof follows immediately from Lemma 2 of~\cite{AMINI20132683}. For any target subspace $\ket{\bar{b}_r} \bra{\bar{b}_r}$, the set $\{\sigma_r\}_{r=1}^d$ can be chosen in such a way that \limset $ = \ket{\bar{b}_r} \bra{\bar{b}_r}$. The idea is the following:
A state $\rho_k$ is in the limit set \limset \ iff for all $u \in [-1,1],$
\begin{equation}
\label{supm}
    \EX[V_{\epsilon}(\rho_{k+T}) | \rho_k, u] - V_{\epsilon}(\rho_k) \geq 0
\end{equation}
Also, since $V_{\epsilon}$ is an open-loop supermartingale, $\forall \rho_k \in D$: 
\begin{equation}
\label{subm}
    \EX[V_{\epsilon}(\rho_{k+T}) | \rho_k, u = 0] - V_{\epsilon}(\rho_k) \leq 0
\end{equation}
By Lemma 2 of \cite{AMINI20132683}, presented in Appendix \ref{ap:sigma} as Lemma~\ref{ap:lem:sig2}, given any $\bar{r} \in \{1,\dots, d\}$ and fixed $\epsilon > 0$, the weights $\sigma_1, \dots, \sigma_d$ can be chosen so that $V_{\epsilon}$ satisfies the following property: $\forall r \in \{1,\dots,d\}, u \mapsto \EX[V_{\epsilon}(\rho_{k+T}) \ | \rho_k = \ket{b_r}\bra{b_r}, \control = u]$ has a strict local minimum at $u=0$ for $r = \bar{r}$ and a strict local maximum at $u=0$ for $r \neq \bar{r}$. This combined with (\ref{subm}) ensures that for any $r \neq \bar{r}, \exists\, u \in [-1,1]$ such that 
\[\EX[V_{\epsilon}(\rho_{k+T}) \ |\, \rho_k = \ket{b_r}\bra{b_r}, \control = u] - V_{\epsilon}(\ket{b_r}\bra{b_r}) < 0\]
Therefore, by (\ref{supm}), we have that $\ket{b_r}\bra{b_r}$ is in the limit set $l_{\infty}$ if and only if $r = \bar{r}$.
\end{proof}

To summarize,  we showed that for the discrete time psychological state evolution of (\ref{M_update}), there exists a control policy which allows the machine to guide the human decisions such that a target decision is made eventually, almost surely.

\subsection{Numerical Example. Controlling Human Decision Evolution}

    We now provide numerical examples which illustrate the time evolution  of the action distribution \eqref{eq:ProbEvo} and target action convergence. In each case we initialize the psychological state $\rho_0$ as a uniform (non-informative) distribution over the action space. The psychological state evolution is given by \eqref{M_update}, and the objective state-action utility $\zeta(\act_i,\CE_j)$ in \eqref{zeta} is initialized to have uniformly distributed random values between 0 and~10.
    
    
\subsubsection{Example 1}
Suppose there are two underlying states of nature $\CE_1$ and $\CE_2$, and four actions $\act_1$, $\act_2$, $\act_3$, $\act_4$. The target action is $\act_4$. We use parameters $\parama = 0.8$ and $\paraml = 10$. We simulated the preference evolution \ref{M_update} according to the protocol (\ref{2Afirst})-(\ref{2Alast}) of Sec. \ref{model} and (\ref{ap:contr})-(\ref{3Alast}) of Sec. \ref{machine_control}. The action preference distribution \eqref{eq:ProbEvo} is plotted as a function of time, with actions and controls taken every $T=10$ time points. A typical sample path is shown in Fig.~\ref{fig:targ_action4}. At each 10 second interval, the chosen action is indicated by the curve that jumps to have probability one. Observe that after time 60 (6 machine interventions) the preference distribution converges to the target action $\act_4$. Thus, the example illustrates the convergence asserted in Theorem~\ref{thm:convg}.

\subsubsection{Example 2}
The human-machine interaction can also yield oscillatory behavior in the preference distribution sample path before convergence; yet, Theorem~\ref{thm:convg} guarantees convergence.\footnote{Previously we demonstrated this oscillatory behavior for the uncontrolled case (Fig.~\ref{fig:time_evol}).}
We have two underlying states of nature $\CE_1$ and $\CE_2$, and two actions $\act_1$, $\act_2$. The target action is $\act_1$. We use parameters $\parama = 0.25$ and $\paraml = 10$. The deliberate choice of a lower value of $\parama$ compared to $\parama = 0.8$ in example 1 induces oscillatory behavior.  Observe that after time 30 (3 machine interventions) the preference distribution converges to the target action $\act_1$.
\begin{figure}[h!]
    \centering
    \includegraphics[width=1\linewidth]{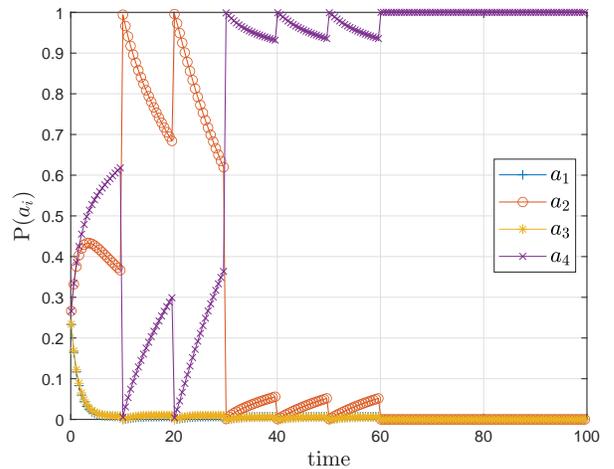}
    \caption{Simulation illustrating how the machine controls the human psychological state to converge to action 4. In this simulation the human decision maker's action probabilities converge after $T=60$ time steps.}
    \label{fig:targ_action4}
\end{figure}
\begin{figure}[h!]
    \centering
    \includegraphics[width=1\linewidth]{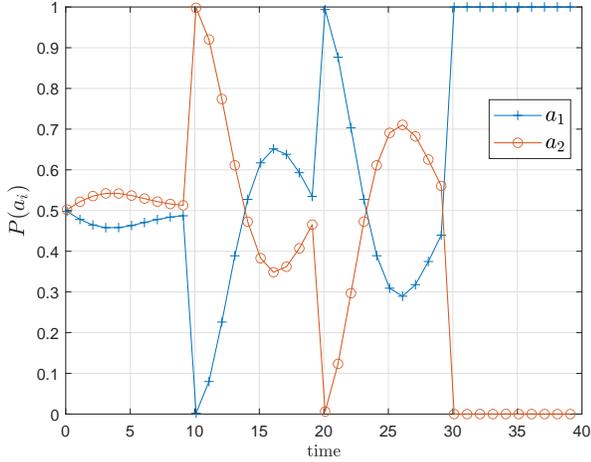}
    \caption{Simulation illustrating how the machine controls the human psychological state to converge to action 1. In this simulation the human decision maker's action probabilities converge after $T=30$ time steps. This example showcases oscillatory behavior in the preference distribution evolution. Such oscillatory behavior cannot be captured by classical Markovian models, as discussed in Sec. \ref{model_practicality}.}
    \label{fig:targ_action5}
\end{figure}

\section{Finite Step Stochastic \ly Stability}
\label{sec4}
Recall in Sec.~\ref{sec3}, our stability proof of the human-machine decision system (Theorem~\ref{thm:convg}) employed a finite-step stochastic \ly argument, the proof of which was postponed to this section. The purpose of this section is two-fold: to generalize an existing finite-step stochastic \ly result given in \cite{qin2019lyapunov}, and to use a special case of this generalization as this argument which enables the proof of our main result, Theorem~\ref{thm:convg}.

This section presents three theorems: 
\begin{itemize}
\item Theorem~\ref{thm:qin} is equivalent to Theorem 1 of \cite{qin2019lyapunov}, but we provide an alternative proof which allows for direct generalization.
\item Theorem~\ref{thm:rand} generalizes Theorem~\ref{thm:qin} to the case when the finite-step size $T$ is a random variable
\item Theorem~\ref{thm:extension} follows as a special case of Theorem~\ref{thm:rand} which is directly applicable to our proof of Theorem~\ref{thm:convg}.
\end{itemize}

Consider the discrete time stochastic system described by 
\begin{equation}
\label{eq1}
    x_{k+1} = f(x_k, y_{k+1}), \  k= 0,1,2,\ldots
\end{equation}
Here $x_k \in \reals^n$, and $\{y_k : k =0,1,2,\ldots\}$ is a $\reals^d$ valued stochastic process on the probability space $(\Omega, \mathcal{F}, \PR)$.
Consider the filtration (increasing sequence of $\sigma$-fields)  defined by $\mathcal{F}_0 = \{\emptyset, \Omega\}, \ \mathcal{F}_k = \sigma(y_1, \dots, y_k) \textrm{ for } k \geq 1$. We choose $x_0 \in \reals^n$ as a  constant with probability one. Thus $\{x_n\}$ is an $\reals^n$-valued stochastic process adapted to $\mathcal{F}_k$. \\

\begin{theorem}
\label{thm:extension}
For the discrete-time stochastic system (\ref{eq1}), let $V: \reals^n \rightarrow \reals$ be a continuous non-negative and radially unbounded function. Suppose there exists a random variable $T: \Omega \rightarrow~\mathbb{N}$ such that for any $k$, 
\begin{equation}
    \EX[V(x_{k+T}) | \mathcal{F}_k] - V(x_k) \leq -\varphi(x_k)
\end{equation}
where $\varphi:~ \reals^n\rightarrow~\reals$ is continuous and satisfies \[\varphi(x) \geq 0 \ \  \forall x \in \reals^n\]
Then for any initial condition $x_0 \in \reals^n$, $x_k$ converges to \\$\mathcal{D}_1 := \{x \in \reals^n : \varphi(x) = 0\}$ with probability one. 
\end{theorem}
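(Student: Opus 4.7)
The plan is to first convert the random-$T$ Lyapunov inequality into a proper supermartingale statement along a stopping-time-indexed subsequence, and then lift the resulting subsequential convergence to full-sequence convergence using continuity of $\varphi$ together with orbit precompactness (which is guaranteed by radial unboundedness of $V$).

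First I would construct a supermartingale. Define $\tau_0 = 0$ and recursively $\tau_{i+1} = \tau_i + T_i$, where each $T_i$ is an i.i.d.\ copy of $T$ independent of $\mathcal{F}_{\tau_i}$, so that the hypothesis, applied at the stopping time $\tau_i$ via the strong Markov / tower property, yields
\begin{equation*}
\mathbb{E}\bigl[V(x_{\tau_{i+1}}) \mid \mathcal{F}_{\tau_i}\bigr] \leq V(x_{\tau_i}) - \varphi(x_{\tau_i}).
\end{equation*}
Since $V, \varphi \geq 0$, the process $\{V(x_{\tau_i})\}_i$ is a non-negative $\mathcal{F}_{\tau_i}$-supermartingale, and Doob's convergence theorem gives $V(x_{\tau_i}) \to V_\infty$ a.s.\ for some finite limit $V_\infty$. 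Summing and taking unconditional expectations yields $\sum_{i=0}^{\infty} \mathbb{E}[\varphi(x_{\tau_i})] \leq V(x_0) < \infty$, so $\varphi(x_{\tau_i}) \to 0$ in $L^1$; combined with $\varphi \geq 0$ and the supermartingale convergence, one obtains $\varphi(x_{\tau_i}) \to 0$ almost surely.

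Second I would pass from this subsequential statement to the full sequence $\{x_k\}$. Radial unboundedness of $V$ together with a.s.\ boundedness of $V(x_{\tau_i})$ forces the sample path $\{x_{\tau_i}(\omega)\}$ into a compact set for a.e.\ $\omega$; continuity of $\varphi$ then implies that every limit point lies in $\mathcal{D}_1$. Repeating the construction with every shifted starting offset $k_0 \in \mathbb{N}$ produces subsequences $\{x_{k_0 + \tau_i}\}_i$ enjoying the same property, and as $k_0$ ranges over $\{0,1,2,\ldots\}$ these subsequences exhaust $\mathbb{N}$, which gives $\mathrm{dist}(x_k, \mathcal{D}_1) \to 0$ almost surely.

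The main obstacle is step one: the hypothesis is a \emph{one-step} inequality for a random $T$, and iterating it into a bona fide supermartingale on the skipped subsequence requires a careful strong-Markov argument to justify that the inequality re-applies at the random time $\tau_i$. This is precisely where Theorem~\ref{thm:rand} enters; its proof must establish that, conditionally on $\mathcal{F}_{\tau_i}$, the future increment $T_i$ can be taken independent of the past by appealing to the i.i.d.\ structure of the driving noise $\{y_k\}$. A secondary subtlety is upgrading ``every limit point lies in $\mathcal{D}_1$'' to ``$x_k$ converges to $\mathcal{D}_1$,'' which uses that $V_\infty$ is a.s.\ constant along the limit set together with continuity of $V$ and $\varphi$ to rule out the orbit oscillating between disjoint components of $\mathcal{D}_1$ at different $V$-levels.
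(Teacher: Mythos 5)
Your overall architecture (supermartingale along a skipped subsequence, then lift to the full sequence) matches the paper's in spirit, but your two key constructions diverge from it and each contains a genuine gap. First, you build renewal times $\tau_{i+1}=\tau_i+T_i$ from i.i.d.\ copies $T_i$ of $T$ and invoke a strong-Markov re-application of the drift inequality at the stopping time $\tau_i$. The hypothesis supplies neither of these: it gives a \emph{single} random variable $T$ and a drift inequality at each deterministic $k$. The paper avoids stopping times entirely: for each sample point $\omega$ it forms the arithmetic subsequences $Y^{k_0}_{k_i}(\omega)=Y_{k_0+iT(\omega)}(\omega)$ with the \emph{same} realization $T(\omega)$ as common difference, proves $\varphi(Y^{k_0}_{k_i})\to 0$ a.s.\ for each fixed offset $k_0\in\mathbb{N}$, and intersects these countably many probability-one events.

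Second, and more seriously, your lifting step does not go through. You argue that the subsequences $\{x_{k_0+\tau_i}\}_i$ for $k_0=0,1,2,\dots$ ``exhaust $\mathbb{N}$'' and conclude $\mathrm{dist}(x_k,\mathcal{D}_1)\to 0$. But a countable family of convergent subsequences covering $\mathbb{N}$ does not force the full sequence to converge: each subsequence has its own threshold $N_{k_0}$ beyond which $\varphi<\epsilon$, and $\sup_{k_0}N_{k_0}$ may be infinite (indeed, with $\tau_0=0$ every index $k$ is merely the \emph{first} term of the subsequence started at $k_0=k$, which carries no tail information). The paper's exhaustion argument (Appendix~\ref{pf:subseq_conv}) works precisely because, for each fixed $\omega$, only the \emph{finitely many} offsets $k_0\in\{1,\dots,T(\omega)\}$ are needed to partition the tail of the index set, so one may take $N^{*}=\max_{k_0\le T(\omega)}N_{k_0}<\infty$. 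Your renewal-time construction destroys this finite-partition structure. Finally, note the paper obtains Theorem~\ref{thm:extension} as the $\lambda=\infty$ case of Theorem~\ref{thm:rand}: radial unboundedness forces $Q_\lambda=\reals^n$ only for $\lambda=\infty$, whence the convergence probability $1-V(x_0)/\lambda$ equals one; you instead use radial unboundedness for orbit precompactness, which is not where it is needed.
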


\begin{proof}
The proof follows from Theorems \ref{thm:qin} and \ref{thm:rand} below. 
\end{proof}

\begin{theorem}
\label{thm:qin}
For the discrete-time stochastic system (\ref{eq1}), let $V: \reals^n \rightarrow \reals$ be a continuous non-negative and radially unbounded function. Define the set $Q_{\lambda} := \{x : V(x) < \lambda\}$ for some positive $\lambda$, and assume that:\\
\ \ \ (a) $\EX[V(x_{k+1}) | \mathcal{F}_k] - V(x_k) \leq 0$ for any $k$ such that $x_k \in Q_{\lambda}$ \\
\ \ \ (b) There exists an integer $T \geq 1$, independent of $\omega \in \Omega$, such that for any $k, \EX[V(x_{k+T}) | \mathcal{F}_k] - V(x_k) \leq -\varphi(x_k)$, where $\varphi: \reals^n \rightarrow \reals$ is continuous and satisfies $\varphi(x) \geq 0$ for any $x \in Q_{\lambda}$ \\ 
Then for any initial condition $x_0 \in Q_{\lambda}$, $x_k$ converges to $\mathcal{D}_1 := \{x \in Q_{\lambda} : \varphi(x) = 0\}$ with probability at least $1 - V(x_0)/\lambda$ \\
\end{theorem}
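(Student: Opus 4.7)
The plan is to split the argument into a confinement bound (which furnishes the $1 - V(x_0)/\lambda$ probability) and then, on the confinement event, a standard Doob supermartingale plus Borel--Cantelli argument applied to each of the $T$ shifted sub-sequences. To carry out the first step I would introduce the exit time $\tau := \inf\{k \geq 0 : V(x_k) \geq \lambda\}$. Hypothesis~(a) makes the stopped process $\{V(x_{k\wedge\tau})\}_{k\geq 0}$ a non-negative $\mathcal{F}_k$-supermartingale: the one-step drift inequality is assumed precisely for $x_k \in Q_\lambda$, i.e.\ for $k < \tau$, while after $\tau$ the stopped process is frozen. Doob's maximal inequality for non-negative supermartingales then yields
\begin{equation*}
\PR(\tau < \infty) \;=\; \PR\!\left(\sup_{k \geq 0} V(x_{k\wedge\tau}) \geq \lambda\right) \;\leq\; \frac{V(x_0)}{\lambda}.
\end{equation*}
Set $E := \{\tau = \infty\}$, so $\PR(E) \geq 1 - V(x_0)/\lambda$ and on $E$ the trajectory stays inside $Q_\lambda$ forever.

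On $E$, $\{V(x_k)\}$ is a non-negative supermartingale bounded above by $\lambda$, so it converges almost surely to a finite limit by Doob's convergence theorem. Now for each shift $j \in \{0,1,\dots,T-1\}$ I would consider the sub-sequence $\{x_{j+\ell T}\}_{\ell \geq 0}$; applying hypothesis~(b), multiplying by $\mathbf{1}_E$, telescoping and using non-negativity of $V$ gives
\begin{equation*}
\sum_{\ell=0}^{L} \EX\!\left[\varphi(x_{j+\ell T})\,\mathbf{1}_E\right] \;\leq\; \EX\!\left[V(x_j)\,\mathbf{1}_E\right] - \EX\!\left[V(x_{j+(L+1)T})\,\mathbf{1}_E\right] \;\leq\; V(x_0).
\end{equation*}
Letting $L \to \infty$ and invoking Tonelli yields $\sum_\ell \varphi(x_{j+\ell T}) < \infty$ almost surely on $E$, so $\varphi(x_{j+\ell T}) \to 0$ a.s.\ on $E$; sweeping $j$ over $\{0,\dots,T-1\}$ upgrades this to $\varphi(x_k) \to 0$ a.s.\ on $E$ along the full index sequence.

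To close the argument I would exploit radial unboundedness of $V$: since $V(x_k) < \lambda$ on $E$, the orbit is contained in a bounded, relatively compact subset of $\reals^n$. Any cluster point $x^\star$ of $\{x_k\}$ must therefore satisfy $\varphi(x^\star) = 0$ by continuity of $\varphi$, so every cluster point lies in the closure of $\mathcal{D}_1$; boundedness together with cluster-set containment in a closed set then forces $\mathrm{dist}(x_k,\mathcal{D}_1) \to 0$ a.s.\ on $E$, which is exactly convergence to $\mathcal{D}_1$ with probability at least $1 - V(x_0)/\lambda$. I expect the main technical obstacle to be the very first step --- legitimising Doob's maximal inequality for the \emph{stopped} supermartingale even though hypothesis~(a) is only asserted inside $Q_\lambda$ --- and, to a lesser extent, ensuring that the $T$ shifted sub-sequence arguments glue cleanly without eroding the confinement event $E$.
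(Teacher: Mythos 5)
Your proposal follows essentially the same route as the paper's proof: a confinement bound from hypothesis~(a) via the supermartingale maximal inequality (the paper cites Kushner for $\PR[\sup_k V(x_k)\geq\lambda]\leq V(x_0)/\lambda$, you derive it from the exit time $\tau$), then the decomposition into the $T$ shifted sub-sequences, a telescoped $T$-step drift bound giving a.s.\ summability of $\varphi$ along each sub-sequence (Fatou in the paper, Tonelli in yours), and finally the observation that the $T$ sub-sequences exhaust the index set. Your closing cluster-point argument via radial unboundedness is a small refinement the paper leaves implicit.

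The one step that does not work as written is the telescoping inequality with $\mathbf{1}_E$ inserted. The event $E=\{\tau=\infty\}$ depends on the entire future trajectory and is not $\mathcal{F}_{j+\ell T}$-measurable, so you cannot multiply the conditional drift inequality of hypothesis~(b) by $\mathbf{1}_E$ and take expectations to obtain
$\EX[(V(x_{j+(\ell+1)T})-V(x_{j+\ell T}))\mathbf{1}_E]\leq-\EX[\varphi(x_{j+\ell T})\mathbf{1}_E]$; the tower property fails for non-adapted indicators. You flag this yourself as the expected obstacle, and the repair is exactly the device the paper uses: stop the process on first exit from $Q_\lambda$ (so the relevant indicator becomes $\mathbf{1}_{\{\tau>k\}}\in\mathcal{F}_k$, and $\varphi$ is taken to vanish on the stopped set), verify that the stopped process still satisfies a nonpositive $T$-step drift, and run the sub-sequence argument on the stopped process; since the stopped and original processes agree on $E$, the conclusion transfers. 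With that substitution your argument is complete and coincides with the paper's.
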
 
\begin{proof}
From assumption (b), 
\begin{equation}
\label{eq2}
    \EX[V(x_{k+T})|\mathcal{F}_k] - V(x_k) \leq -\varphi(x_k) \leq 0 ,  \forall x_k \in Q_{\lambda}
\end{equation} where $\varphi(x)$ is continuous. Now, by assumption (a) and \cite[p.196]{kushner1971introduction}, if we start with $x_0 \in Q_{\lambda}$ then the probability of staying in $Q_{\lambda}$ during the entire resultant trajectory is at least $1 - V(x_0)/\lambda$, i.e. 
\begin{equation}
\label{eq3}
    \PR[\sup_{k \in \mathbb{N}}V(x_k) \geq \lambda] \leq V(x_0)/\lambda
\end{equation}
Next construct  $T$ subsequences of $\{X_k\}$ as follows: 
\[\{X_i^{(0)}\} = \{X_0,X_T,\dots\}, \{X_i^{(1)}\} = \{X_1,X_{T+1},\dots\},\dots\]
\[\{X_i^{(T-1)}\} = \{X_{T-1},X_{2T-1},\dots\}\] 
Suppose $\varphi(x) \geq 0 \ \forall x:$ Then for all $k \in K := \{0,\dots,T-1\}$, $\{V(X_i^{(k)})\}$ is a non-negative supermartingale process by (\ref{eq2}), and thus by Doob's convergence theorem converges to a limit with probability 1. From (\ref{eq2}) we have for all $k \in K$ and $n \in \mathbb{N}$
\[\sum_{l=1}^{n}\EX(V(X_l^{(k)}))-\EX(V(X_{l-1}^{(k)}))\leq-\EX(\sum_{l=0}^{n-1}\varphi(X_l^{(k)}))\]

and thus
\[0 \leq \EX(V(X_n^{(k)})) \leq \EX(V(X_0^{(k)})) - \EX(\sum_{l=0}^{n-1}\varphi(X_l^{(k)}))\]
We use Fatou's Lemma to obtain 
\[\EX(\sum_{l=0}^{\infty}\varphi(X_l^{(k)})) < \infty\] 
and by the  Borel-Cantelli lemma we have 
\[\PR[\lim_{n \to \infty}\varphi(X_n^{(k)}) = 0] = 1\, \forall\, k \in K
\]
Now suppose $\varphi(x) \geq 0$ only for $x \in Q_{\lambda}.$ Stop $\{X_n^{(k)}\}$ on first leaving $Q_{\lambda}$. Then for $x \notin Q_{\lambda}$, $\varphi(x) = 0$ for this stopped set. This stopped process is a supermartingale and the proof is the same as above.  \\
It is now apparent that 
\begin{equation}
    \lim_{n \to \infty}\varphi(X_n^{(k)}(\omega)) = 0 
    \label{limitConv}
\end{equation} 
$\, \forall\, k \in K\,\omega \in \bar{\Omega} = \{\omega \in \Omega : x_n(\omega) \in Q_{\lambda} \ \forall\, n \in \mathbb{N}\}$ so we have \[\PR[\lim_{n \to \infty}\varphi(X_n(\omega)) = 0] \geq 1- V(x_0) / \lambda\] by the  analysis in  Appendix \ref{pf:subseq_conv} and (\ref{eq3}). \\
\end{proof}

We now present a generalization of Theorem~\ref{thm:qin} for the case when the finite-step length $T$ is a random variable.  
\begin{theorem}
\label{thm:rand}
For the discrete-time stochastic system (\ref{eq1}), let $V: \reals^n \rightarrow \reals$ be a continuous non-negative and radially unbounded function. Define the set $Q_{\lambda} = \{x : V(x) < \lambda\}$ for some positive $\lambda$, and assume that:\\
\ \ \ (a) $\EX[V(x_{k+1}) | \mathcal{F}_k] - V(x_k) \leq 0$ for any $k$ such that $x_k \in Q_{\lambda}$ \\
\ \ \ (b) There exists a random variable $T: \Omega \rightarrow \mathbb{N}$, such that for any $k,$ \[ \EX[V(x_{k+T}) | \mathcal{F}_k] - V(x_k) \leq -\varphi(x_k)\] where $\varphi: \reals^n \rightarrow \reals$ is continuous and satisfies $\varphi(x) \geq 0$ for any $x \in Q_{\lambda}$ \\ 
Then for any initial condition $x_0 \in Q_{\lambda}$, $x_k$ converges to $\mathcal{D}_1 :=~\{x \in Q_{\lambda} : \varphi(x) = 0\}$ with probability at least $1 - V(x_0)/\lambda$ 
\end{theorem}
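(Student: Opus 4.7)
The plan is to mirror the proof of Theorem~\ref{thm:qin} but replace its partition of $\{x_k\}$ into $T$ deterministic residue classes with an expectation-summation argument that absorbs the randomness of $T$ into a single factor of $\EX[T]$. Throughout I assume $T$ is independent of $\{\mathcal{F}_k\}$ with PMF $\pi_T$ on $\mathbb{N}$, as is natural in the application of Sec.~\ref{sec3}. First, condition (a) together with Kushner's maximal inequality \cite[p.~196]{kushner1971introduction} gives $\PR(\sup_k V(x_k) \geq \lambda) \leq V(x_0)/\lambda$ exactly as in Theorem~\ref{thm:qin}, so it suffices to establish convergence on the event $\bar{\Omega} := \{\omega : x_k(\omega) \in Q_\lambda \ \forall k\}$, which has probability at least $1 - V(x_0)/\lambda$. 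On $\bar{\Omega}$, iterating (a) upgrades the one-step supermartingale bound to $\EX[V(x_{k+j})\mid\mathcal{F}_k] \leq V(x_k)$ for every deterministic $j \geq 1$, and the independence of $T$ yields the expansion $\EX[V(x_{k+T})\mid\mathcal{F}_k] = \sum_j \pi_T(j)\,\EX[V(x_{k+j})\mid\mathcal{F}_k]$.

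Next, taking unconditional expectations in (b), summing over $k = 0,\ldots,N$, interchanging the $k$- and $j$-sums, and telescoping yields
\begin{align*}
\sum_{k=0}^{N}\EX[\varphi(x_k)] &\leq \sum_{k=0}^{N}\bigl(\EX[V(x_k)] - \EX[V(x_{k+T})]\bigr) \\
&= \sum_{j}\pi_T(j)\Bigl(\sum_{k=0}^{j-1}\EX[V(x_k)] - \sum_{k=N+1}^{N+j}\EX[V(x_k)]\Bigr) \\
&\leq V(x_0)\,\EX[T],
\end{align*}
where the last inequality uses $\EX[V(x_k)] \leq V(x_0)$ (a consequence of (a)) and discards the non-negative boundary subtraction. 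Provided $\EX[T] < \infty$, letting $N \to \infty$ and invoking monotone convergence give $\EX\bigl[\sum_k \varphi(x_k)\mathbf{1}_{\bar{\Omega}}\bigr] < \infty$, hence $\sum_k \varphi(x_k) < \infty$ and in particular $\varphi(x_k) \to 0$ almost surely on $\bar{\Omega}$. Radial unboundedness of $V$ combined with $V(x_k) < \lambda$ on $\bar{\Omega}$ keeps $\{x_k\}$ in a bounded set, so continuity of $\varphi$ forces every subsequential limit of $\{x_k\}$ to lie in $\mathcal{D}_1$, giving $\mathrm{dist}(x_k, \mathcal{D}_1) \to 0$ with probability at least $1 - V(x_0)/\lambda$.

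The main obstacle is precisely the loss of Theorem~\ref{thm:qin}'s clean residue-class partition: when $T$ is random, no deterministic subsequencing of $\{x_k\}$ yields a supermartingale of step $T$. The telescoping estimate above is the natural workaround, but its price tag is a finite $\EX[T]$ and independence of $T$ from the filtration. If that independence is unavailable, the fallback is to construct a renewal subsequence $\tau_n = T_1 + \cdots + T_n$ from i.i.d.\ copies of $T$ and apply the Doob--Fatou--Borel-Cantelli chain of Theorem~\ref{thm:qin} to $\{V(x_{\tau_n})\}$ to obtain $\varphi(x_{\tau_n}) \to 0$ a.s.; the remaining step of bootstrapping from this subsequence to the full sequence is then handled using (a) together with continuity of $\varphi$ between consecutive renewals.
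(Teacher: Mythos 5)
Your main telescoping argument is clean and, under its own hypotheses, essentially correct, but it proves a strictly weaker statement than Theorem~\ref{thm:rand}: it needs $T$ independent of the process \emph{and} $\EX[T]<\infty$, and neither is assumed in the theorem (the PMF $\pi_T$ on $\mathbb{N}$ may well have infinite mean). The paper's proof needs neither. Its route is pathwise: after stopping on exit from $Q_\lambda$, for every offset $k_0\in\mathbb{N}$ it forms the random-spacing subsequence $Y^{k_0}_{k_i}(\omega)=Y_{k_0+iT(\omega)}(\omega)$, runs the Doob--Fatou--Borel--Cantelli chain of Theorem~\ref{thm:qin} on each to get $\varphi(Y^{k_0}_{k_i})\to 0$ a.s., intersects the countably many probability-one events over $k_0$, and then observes that for each fixed $\omega$ the offsets $k_0=1,\dots,T(\omega)$ exhaust the whole sequence. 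The only price is that $T$ is a single random variable drawn once per sample path (so that the spacing $T(\omega)$ is constant along each path), which is exactly the regime the paper works in. Your approach trades that pathwise exhaustion for an averaged estimate $\sum_k\EX[\varphi(x_k)]\le V(x_0)\EX[T]$, which is arguably more transparent but buys less generality; note also that the bound $\EX[V(x_k)]\le V(x_0)$ and the nonnegativity of the summands require you to run the whole computation on the stopped process, for which you must also assert that (b) survives stopping (the paper makes the same silent move, so this is a shared, not a new, gap).

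The genuine problem is your fallback for the case where independence or $\EX[T]<\infty$ fails, which is where the actual content of Theorem~\ref{thm:rand} lives. First, replacing $T$ by i.i.d.\ copies $T_1,T_2,\dots$ and renewal times $\tau_n=T_1+\cdots+T_n$ changes the model: assumption (b) concerns $x_{k+T}$ for one fixed random variable $T$, and the i.i.d.-copies setting is precisely the ``$T$ a nontrivial random process'' case that the paper explicitly states its subsequence argument cannot handle (the induced subsequences no longer exhaust $\{x_k\}$). Second, even granting $\varphi(x_{\tau_n})\to 0$ along the single renewal subsequence, the proposed bootstrap to the full sequence does not work: condition (a) is a supermartingale property of $V$ and gives no control whatsoever on $\varphi(x_k)$ for $k$ strictly between consecutive renewals, and ``continuity of $\varphi$'' relates values of $\varphi$ at nearby points of $\reals^n$, not at consecutive times of the chain (there is no bound on $\|x_{k+1}-x_k\|$). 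The paper closes exactly this hole by using \emph{all} offsets $k_0=1,\dots,T(\omega)$ rather than one subsequence, so that every index of the original sequence is covered. To repair your proof you should either adopt that exhaustion device or restrict the theorem's hypotheses to match what your telescoping argument actually uses.
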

\textit{Remark}: Note that Theorem~\ref{thm:rand} is equivalent to Theorem~\ref{thm:qin} when the condition that $T$ is  independent of $\omega \in \Omega$ is removed. 

\begin{proof} 
Let $\{X_i\}_{n \in \mathbb{N}}$ be the original random process corresponding to the stochastic system \eqref{eq1} initialized with $x_0 \in Q_{\lambda}$. Construct the random process $\{Y_i\}_{i \in \mathbb{N}}$ by inducing the same stopped process on $\{X_i\}$ as in the proof of Theorem~\ref{thm:qin}. This simplifies analysis by omitting sample paths which leave the set $Q_{\lambda}$. Thus we consider only the sample paths which remain in $Q_{\lambda}$ for the entire trajectory and we will impose a bound on the measure of this set below.\footnote{This procedure is done implicitly in the proof of Theorem~\ref{thm:qin}.}
Now define a new random process $\{Y_{k_i}^{k_0}\}_{i \in \mathbb{N}}$ as follows: \\
Starting with $Y_{k_0}, k_0 \in \mathbb{N}$, define $\forall \omega \in \Omega: Y_{k_{1}}^{k_0} (\omega) = Y_{k_0 + T(\omega)}(\omega), Y_{k_{2}}^{k_0} (\omega) = Y_{k_0 + 2T(\omega)}(\omega) , \dots$  \[Y_{k_N}^{k_0} = Y_{k_0 + NT(\omega)}(\omega)\]i.e. take a subsequence (over multiples of $T(\omega)$) of the original sequence initialized at $Y_{k_0}, k_0 \in \mathbb{N}$. Then by (b) we have:
\begin{multline}
    \EX[V(Y_{k_{i+1}}^{k_0})|\mathcal{F}_{k_i}] - V(Y_{k_i}^{k_0}) \leq -\varphi(Y_{k_i}^{k_0}) \leq 0 , \\ \forall Y_{k_i}^{k_0} \in Q_{\lambda}, \ i \in \mathbb{N}.
\end{multline}
where $\mathcal{F}_{k_i} = \sigma(\{Y_n\}_{n=0}^{k_i})$.
Thus by the same method as in the proof of Theorem~\ref{thm:qin}, we can obtain
\begin{equation*}
    0 \leq \EX(V(Y_{k_{i+1}}^{k_0})) \leq \EX(V(Y_{k_0})) - \EX(\sum_{l=0}^{i}\varphi(Y_{k_l}^{k_0}))
\end{equation*}
and so by Borel-Cantelli:
\begin{equation}
\label{BorelCantelli}
    \PR[\lim_{i \to \infty}\varphi(Y_{k_i}^{k_0}) = 0] = 1 \ \ \forall k_0 \in \mathbb{N}
\end{equation}
where recall $\{Y_{k_i}^{k_0}\}_{i \in \mathbb{N}}$ is the sequence induced by the initial condition $Y_{k_0}$. Let $\Omega_{k_0} = \{\omega :  \lim_{i \to \infty}\varphi(Y_{k_i}^{k_0}) = 0\}$. We know from \eqref{BorelCantelli} that $\PR(\Omega_{k_0}) = 1 \ \forall k_0 \in \mathbb{N}$ and thus 
\[\PR(\{\omega :  \lim_{i \to \infty}\varphi(Y_{k_i}^{k_0}) = 0 \ \forall k_0 \in \mathbb{N}\})  = \PR(\bigcap_{k_0 \in \mathbb{N}}\Omega_{k_0}) = 1\]
Define \[\hat{\Omega} = \{\omega : \lim_{i \to \infty}\varphi(Y_{k_i}^{k_0}(\omega)) = 0 \ \forall k_0 \in \{1,\dots,T(\omega)\}\] 
Obviously $\bigcap_{k_0 \in \mathbb{N}} \Omega_{k_0}\,\subseteq\,\hat{\Omega}$ and so $\PR(\hat{\Omega}) = 1$, and $\forall \omega \in \hat{\Omega}$
the set of sub-sequences $\{Y_{k_i}^{k_0}(\omega)\}_{i \geq 1, k_0 = 1,\dots,T(\omega)}$ exhaust the entire sequence $\{Y_i(\omega)\}_{i \in \mathbb{N}}$ and thus $\lim_{i \to \infty}\varphi(Y_i(\omega)) = 0 \ \forall \omega \in \hat{\Omega}.$ Denote $\Omega^{*} = \{\omega : \lim_{i \to \infty}\varphi(Y_i(\omega)) = 0\} \supseteq \hat{\Omega}$. So 
$\PR(\Omega^{*}) = 1$, and observe that 
\[\Omega^{*} \cap \bar{\Omega} \subseteq \{\omega : \lim_{n \to \infty}\varphi(X_n(\omega)) = 0\}\]
so
\begin{equation*}
    \PR[\lim_{n \to \infty}\varphi(X_n(\omega)) = 0] \geq  \PR[\Omega^{*} \cap \bar{\Omega}] \geq 1- V(x_0) / \lambda
\end{equation*}
where $\bar{\Omega}$ is the same as defined in the proof of Theorem~\ref{thm:qin} and $\PR(\bar{\Omega}) \geq 1-V(x_0)/\lambda$ by assumption (a) and \cite[p.196]{kushner1971introduction}. \\
\end{proof}

Theorem~\ref{thm:qin} is an alternative proof of Theorem 1 in \cite{qin2019lyapunov} and Theorem~\ref{thm:rand} builds upon it to generalize $T$ to a random variable. Theorem~\ref{thm:extension} follows from Theorem~\ref{thm:rand} in the following way: Start with the assumptions of Theorem~\ref{thm:rand}. Now let $\phi(x) \geq 0 \ \forall x \in \reals^n$. Then $Q_{\lambda} = \reals^n$. Now since $V$ is radially unbounded, the only such $\lambda$ for which this can hold is $\lambda = \infty$. Then as a result of Theorem~\ref{thm:rand} we have: for any initial $x_0 \in Q_{\lambda} = \reals^n$, $x_k$ converges to $\mathcal{D}_1$ with probability at least $1-V(x_0)/\lambda = 1$, since $\lambda = \infty$.

To summarize, this section provided a generalization of the finite-step \ly function result of Qin. et. al. \cite{qin2019lyapunov}, by making $T$ a random variable. We applied this to show stability of the Lindbladian dynamics to prove almost sure convergence of the density operator (psychological state), but the generalization is of independent interest. 

\section{Conclusion and Extensions}
\label{sec5}
 Quantum decision models have emerged as a novel paradigm in psychology that can parsimoniously model special aspects of human decision making (such as order effects and sure thing principle) which  classical Markov models cannot. 
 In this paper, we formulated a human-machine interaction system as a {\em controlled} quantum decision making process. 
 We constructed a machine control procedure based on stochastic \ly stability which allows the psychological state of a human to be guided to any arbitrary target state eventually almost surely. This used a  random finite-step stochastic Lyapunov function result, which is  also of independent interest. 

The Lindbladian quantum  models that we used, capture  subjective biases and violations of rationality which cannot be captured by classical models; If a human's decision making is suboptimal due to these subjective irrational biases, then the machine can compensate and guide the human to be an optimal Bayesian expected utility maximizer (with respect to the machine's state measurement). 

We made  several simplifying assumptions that  warrant further investigation. We  assumed that the machine knows the parameterization of the human's Lindbladian  operator for the psychological state. It would be interesting to see what the analysis yields when these are only estimates with some distribution. We  considered the case when there is \textit{one} specified action for the human to be guided to. Generalization to classes of actions and associated  rate of convergence to the  target actions are interesting extensions.


\appendices
\section{Quantum Decision Model. Additional Details}
This Appendix discusses additional details of the  Quantum Decision model used in the paper. We draw heavily from   \cite{martinez2016quantum}.
\subsection{Open-Quantum System as a Psychological Model}
\label{model_derivation}
The derivation of the psychological evolution \eqref{Lindblad} uses a quantum random walk (QRW) over a state-action network, see \cite{martinez2016quantum} for details. The purpose of this subsection is to discuss this QRW formulation of \eqref{Lindblad}. Each node of the network represents a unique state-action pair. So, in our case there are a total $nm$ nodes. The network is decomposed into $m$ sub-graphs, representing underlying states. Each sub-graph has $n$ nodes and each node within this sub-graph has a self loop and is also connected to all the other nodes in the sub-graph bidirectionally, representing the action-action transition with in an underlying state. These edges are weighed according to objective state-action utility function $\utility$. These $m$ sub-graphs are connected via same actions across all underlying states. An example of this structure can be seen in Fig.~ \ref{fig:Network}, for the case of two underlying states and three actions $\act_1,\, \act_2,\, \act_3$. This network structure allows for the \textit{simultaneous} comparison of utilities and elicitation of beliefs in the underlying state.
\begin{figure}[h!]
    \centering
    \resizebox{0.48\textwidth}{!}{%
      \begin{tikzpicture}[scale=0.5, >=stealth']
    \begin{scope}[every node/.style={circle,thick,draw}]
        
        \node (A1_1) at (-9,-6) {$\act_1$};
        \node (A2_1) at (-3,-2) {$\act_2$};
        \node (A3_1) at (2,-6) {$\act_3$};
        
        \node (A1_2) at (-9,5) {$\act_1$};
        \node (A2_2) at (-3,1) {$\act_2$};
        \node (A3_2) at (2,5) {$\act_3$};
    \end{scope}
    
    \begin{scope}[>={Stealth[red]},
              every node/.style={fill=white,circle},
              every edge/.style={draw=red}]
             
             \path [->] (A1_1) edge[bend left=15]  (A3_1);
             \path [->] (A1_1) edge[bend left=15]  (A2_1);
             \path [->] (A1_1) edge[loop left=10]  (A1_1);
             \path [->] (A2_1) edge[bend left=15]   (A1_1);
             \path [->] (A2_1) edge[bend left=15]   (A3_1);
             \path [->] (A2_1) edge[loop left=1]  (A2_1);
             \path [->] (A3_1) edge[bend left=15]  (A1_1);
             \path [->] (A3_1) edge[bend left=15]  (A2_1);
             \path [->] (A3_1) edge[loop right=10]  (A3_1);
    \end{scope}
    
    \begin{scope}[>={Stealth[blue]},
              every node/.style={fill=white,circle},
              every edge/.style={draw=blue}]
             
             \path [->] (A1_2) edge[bend left=10]  (A3_2);
             \path [->] (A1_2) edge[bend left=10]  (A2_2);
             \path [->] (A1_2) edge[loop left=10]  (A1_2);
             \path [->] (A2_2) edge[bend left=10]   (A1_2);
             \path [->] (A2_2) edge[bend left=10]   (A3_2);
             \path [->] (A2_2) edge[loop left=10]  (A2_2);
             \path [->] (A3_2) edge[bend left=10]  (A1_2);
             \path [->] (A3_2) edge[bend left=10]  (A2_2);
             \path [->] (A3_2) edge[loop right=10]  (A3_2);
    \end{scope}

    \begin{scope}[>={Stealth[black]},
              every edge/.style={draw= black,densely dotted}]
            
            \path [->] (A3_2) edge (A3_1);
            \path [->] (A3_1) edge (A3_2);
            
            \path [->] (A2_2) edge (A2_1);
            \path [->] (A2_1) edge (A2_2);
            
            \path [->] (A1_2) edge (A1_1);
            \path [->] (A1_1) edge (A1_2);
        
    \end{scope}
     \node [ draw , rectangle , thick, dashed,red, minimum width = 9cm, 
    minimum height = 3.5cm] at (-3.5 ,-4.5) {$\CE_2$};
     \node [ draw , rectangle , thick, dashed,blue, minimum width = 9cm, 
    minimum height = 3.5cm] at (-3.5 ,3.5) {$\CE_1$};
  \end{tikzpicture} 
  }
    \caption{State-Action network over which the Quantum Random Walk takes place. $\CE_j$'s are the underlying states and $\act_i$'s are the actions.}
    \label{fig:Network}
\end{figure}
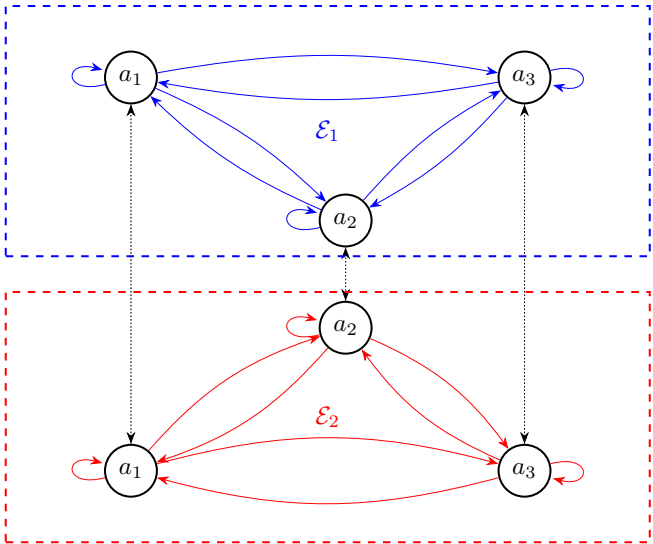
\subsection{Stochastic Quantum Random Walk. Lindbladian Dynamics}
\label{constr_lind}
We now relate the QRW in Appendix~\ref{model_derivation} to the Lindbladian Dynamics in \eqref{Lindblad}.
  This forms the basis of the decision making process \eqref{M_update} that  we control.
 \begin{enumerate}
     \item The state space  $\CX = \{\CE_1, \CE_2, \dots, \CE_N \}$. The Hilbert space over $\CX$ is defined as $\CH_{\CE} :=~\CH\{\ket{\CE_1},\dots,\ket{\CE_n}\}$. This space contains $\ket{\mathcal{E}_l}$, $n$-dimensional complex vectors indexed by the state $l$ 
     \item The action space  $\CA = \{\act_1,\act_2,\dots, \act_m\}$. The Hilbert space over $\CA$ is defined as $\CH_{a} :=~\CH\{\ket{a_1},\dots,\ket{a_m}\}$. This space contains $\ket{\act_j}$, $m$-dimensional complex vector indexed by the action $j$
     \item The psychological state space  $\CS := \CH_{\CE}\otimes \CH_{a}$. This space contains all action-state pairs. Each tuple $\ket{\CE_l,\,\act_j}$ represents each node in the graph given in Sec.~\ref{model_derivation}.
 \end{enumerate}
The psychological state $\rho$ is represented as a density operator in the Hilbert Space $\CS$, and its evolution is given by (\ref{Lindblad}). 
We also define non-negative functions $f(\control)$ and $g(\control)$ to be continuously differentiable and equal to zero for $\control = 0$.
\begin{enumerate}[resume]
\item The  Hamiltonian $H^u$ is defined as the block diagonal $nm~\times~nm$ symmetric matrix
\begin{align}
    \begin{split}
    \label{Hamiltonian}
        H^u := {\begin{bmatrix}
        \textbf{H} & \frac{\mathbf{I}\,f(u)}{z} & \dots & \frac{\mathbf{I}\,f(u)}{z}\\
        \frac{\mathbf{I}\,f(u)}{z} & \textbf{H} & \dots & \frac{\mathbf{I}\,f(u)}{z}\\
        \vdots & \vdots &  \ddots & \vdots\\
        \frac{\mathbf{I}\,f(u)}{z} & \frac{\mathbf{I}\,f(u)}{z} & \dots & \textbf{H}\\
      \end{bmatrix}}
    \end{split}
\end{align}
%
\[\textbf{H} := \begin{bmatrix}
        \frac{1}{z}(1-f(u)) & \frac{1}{z}f(u) & \dots & \frac{1}{z}f(u)) \\
        \frac{1}{z}f(u) & \frac{1}{z}(1-f(u)) & \dots & \frac{1}{z}f(u) \\
        \vdots & \vdots & & \vdots\\
        \frac{1}{z}f(u) & \frac{1}{z}f(u) & \dots & \frac{1}{z}(1-f(u)) \\
      \end{bmatrix} \]
with $z = (m+n-3)f(u) + 1$.
\end{enumerate}
$\parama$ determines a convex combination of the quantum and Markovian terms in \eqref{Lindblad}. A larger value of $\parama$ will result in more dissipative (less oscillatory) dynamics \cite{martinez2016quantum}. 
\begin{enumerate}[resume]
\item $L_{k,j}=\ket{k}\bra{j}$ represents the walk from $k^{th}$ action-state tuple to $j^{th}$ action-state tuple.
     \item We define $\gamma_{(k,j)}^u$ as the $(k,j)^{th}$ element in the \textit{Cognitive Matrix}
\begin{align}
\label{cog_mat}
\begin{split}
     \gamma_{(k,j)}^u&:=[C(\paraml,\paramp,u)]_{k,j} \\&= [(1-\paramp)\Pi^T(\paraml,u) + \paramp B^T]_{k,j}
\end{split}
\end{align}
\[\textrm{where}\quad \Pi(\paraml,\,u) := \diag{(\textbf{A}_1,\, \textbf{A}_2,\cdots,\,,\textbf{A}_N)}\]
      Each $\textbf{A}_j$ is the $m \times m$ matrix
       \[[\textbf{A}_j]_{(k,l)}=\begin{cases}\frac{\utility_{{(l,k)}_j}\,g(u)}{z_{{(k,k)}_j}}& k\neq l\\
      \frac{\utility_{{(k,k)}_j}\,(1-g(u))}{z_{{(k,k)}_j}}& k=l
      \end{cases}\]
\[z_{{(k,k)}_j}=\utility_{(k,k)_j}+(1-2\,\utility_{(k,k)_j})\,g(\control)\]
 Here the agent's subjective utility of action $\act_i$ given state $\CE_j$ is defined as 
\begin{equation}
\label{subutil}
    \utility_{i,j}(\paraml) = \frac{\zeta(\act_i, \CE_j)^{\paraml}}{\sum_i \zeta(\act_i, \CE_j)^{\paraml}}
\end{equation} 
where $\zeta(\act_i, \CE_j)$ is the objective utility \eqref{zeta}.
\[    B:=  {\begin{bmatrix}
        \posterior(\mathcal{E}_1) & \posterior(\mathcal{E}_2)&\cdots&\posterior(\mathcal{E}_{n}) 
      \end{bmatrix} }\,\otimes\,\mathbf{1}_{m\times1}\otimes \mathbf{I}_{m\times m}
 \] Recall $\posterior$ is the posterior belief of the underlying state.
\end{enumerate}

{\bf Discussion of \eqref{Hamiltonian}, \eqref{cog_mat}, \eqref{subutil}.}

In \eqref{subutil}, $\paraml \in [0,\infty)$ models the agent's ability to discriminate between the profitability of different actions, in terms of the objective utility \eqref{zeta}. As seen in \eqref{subutil}, when $\paraml = 0$ all actions $a_j \in \mathcal{A}$ have the same probability of being chosen. When $\paraml \rightarrow \infty$, only the dominant action is chosen.

In \eqref{cog_mat}, $\paramp \in (0,1)$ interpolates between the $\Pi$ and $B$ matrices and therefore models how much the human distinguishes between the underlying states $\{\mathcal{E}_1, \dots, \mathcal{E}_{n}\}$. In our framework, $\paramp$ is a function of the control input and is given by $\paramp=u^{2l},\,l=2,3,\cdots$ to satisfy the input constraints in Sec.~\ref{machine_control}.
\label{Lindblad_param}
     $\Pi(\paraml,u)$, in \eqref{cog_mat}, acts as a transition probability matrix between nodes within individual connected components (each state grouping) of the State-Action network, as shown as the blue and red sections of the two-state example network Fig.~\ref{fig:Network}.
 The matrix $B$ is also a stochastic matrix and represents the transitions between actions.
Without loss of generality, we can assume that an external sensor makes measurements of the state, computes the Bayesian posterior, and provides this to the human.

The cognitive matrix $C(\paraml,\paramp,u)$ in~\eqref{cog_mat} and  Hamiltonian $H^u$ \eqref{Hamiltonian} act as transition matrices on the State-Action network. Taken together in the structure of (\ref{Lindblad}), these induce a quantum random walk on this network which represents the psychological evolution of state-action preferences. 
 There can be a wide variety of observed preference evolution within the two dimensional parameter space defined by ($\parama, \paraml$). We can exploit the results from \cite{6389714} to  estimate these parameters.
 \section{Additional Proofs}
 \subsection{Proof of Theorem~\ref{LindbladKraussProof}}
 \label{proof_lind}
 \begin{proof}
A first order Euler discretization of \eqref{Lindblad} yields:
\begin{align}
\begin{split}
\label{euler_discr}
\rho(t+\Delta t)&=[\mathbf{1}-T\,\Delta t\,(i\,(1-\parama)\,H^u\\&+\parama\,\frac{1}{2}\sum_{m,n}\,\gamma_{m,n}^u\,L_{(m,n)}^{\dagger}\,L_{(m,n)})]\,\rho(t) \\ &[\mathbf{1}-T\,\Delta t\,(-i\,(1-\parama)\,H^u\\ &+\parama\,\frac{1}{2}\sum_{m,n}\,\gamma_{m,n}^u\,L_{(m,n)}^{\dagger}\,L_{(m,n)})]\\ + &\parama\,\Delta t\,\sum_{m,n}\gamma_{m,n}^u L_{(m,n)}\,\rho(t)\,L_{(m,n)}^{\dagger}
\end{split}
\end{align}
where $\Delta t$ is the interval of discretization and is sufficiently small so that the bracketed terms of \eqref{euler_discr} remain positive. Denoting $\rho_{k+1}$ as $\rho(t+\Delta t)$ and $\rho_{k+T}$ as $\rho(t+\Delta t\,T)$:
\begin{equation}
\label{lindkraus}
    \rho_{k+T}=\sum_{\nu}K_{\nu,T}^{\control}\,\rho_{k}\,K_{\nu,T}^{\control \dagger}
\end{equation}
where 
\begin{align*}
\begin{split}
   K_{0,T}^{\control}&=[\mathbf{1}-T\,\Delta t\,(i\,(1-\parama)\,H^{\control}\\             &+\parama\,\frac{1}{2}\sum_{m,n}\,\gamma_{m,n}^{\control}\,L_{(m,n)}^{\dagger}\,L_{(m,n)})]   
\end{split}
\end{align*}
\[K_{\nu\neq0,T}^{\control}=\sqrt{T\,\Delta t\,\,\parama\,\gamma_{m,n}^{\control}}L_{(m,n)}, \>T\in\BN\] 
Eq.~\eqref{lindkraus} represents the discrete time psychological preference evolution. This representation is used to obtain the form (\ref{M_update}), which allows us to prove convergence of the psychological state in Sec. \ref{sec3}. 

When the human chooses action $a$, the psychological state is projected into the respective action subspace of the Hilbert Space $a \otimes \CX$, defined in \ref{machine_control}. The psychological state  evolves  as:
\begin{equation*}
    \rho_{k+T} = \frac{P_{a_k}(\sum_{\nu}K_{\nu,T}\,\rho_{k}\,K_{\nu,T}^{\dagger})P_{a_k}^{\dagger}}{\Tr(P_{a_k}(\sum_{\nu}K_{\nu,T}\,\rho_{k}\,K_{\nu,T}^{\dagger}) P_{a_k}^{\dagger})}
\end{equation*}
where $\act_k$ is a random variable taking values in $\{1,\dots,m\}$, with probability given by 
\begin{equation*}
    \PR(a_k) = \Tr(P_{a_k}(\sum_{\nu}K_{\nu,T}\,\rho_{k}\,K_{\nu,T}^{\dagger}) P_{a_k}^{\dagger})
\end{equation*}
from \eqref{eq:ProbEvo}.
$P_{a_k}$ represents the projection into the subspace corresponding to the action taken, i.e. the action acts as a quantum measurement on the psychological state. 
Now letting 
\begin{equation}
\label{M_def}
    M_{a_k}^{\control} = P_{a_k}(\sum_{\nu}K_{\nu,T}^{\control})
\end{equation}
we obtain (\ref{M_update}).
\end{proof}
\subsection{Convergence of constructed subsequences implies convergence of sequence}
\label{pf:subseq_conv}
From (\ref{limitConv}) we have $\lim_{n \to \infty}\varphi(X_n^{(k)}(\omega)) = 0 \ \forall\, \omega \in  \bar{\Omega}$. Let $\omega \in \bar{\Omega}$ and $\varphi_n^{(k)}$ denote $\varphi(X_n^{(k)}(\omega))$ and $\varphi_n$ denote $\varphi(X_n(\omega))$. \\ 
We have: $\forall\, \epsilon > 0 \ \exists\, N_k$ such that $\varphi_n^{(k)} < \epsilon \ \forall n > N_k$. Take $N^{*} = \max_{k \in \{0,\dots,T-1\}}N_k$. Suppose $\lim_{n \to \infty}\varphi_n \neq 0$: Then $\exists\, \epsilon > 0$ such that $\forall \ N \in \mathbb{N} \ \exists \ n_0 > N$ with $\varphi_{n_0} > \epsilon.$ Since the subsequences are exhaustive, i.e. for any $\varphi_n \ \exists\, k, m,$ such that $\varphi_n = \varphi_m^{(k)}$, we know that for any $\epsilon > 0$, $\varphi_n < \epsilon \ \forall\, n > N^{*}$ so such a $n_0$ does not exists for $N^{*}$ and thus by contradiction we have $\lim_{n \to \infty}\varphi_n = 0.$ \\

\subsection{Construction of $\{\sigma_r\}_{r=1}^d$}
This subsection explains how we choose the parameters $\{\sigma_r\}_{r=1}^d$ for the \ly function $V_\epsilon(\cdot)$  in \eqref{Ly_fcn}. The following two Lemmas are provided in \cite{AMINI20132683};  for completeness we list them here. \cite{AMINI20132683} uses these Lemmas to show that the limit set of the process of interest contains only the target state. We provide  intuition for this idea in the proof of Lemma~\ref{limset_restr} (Section~\ref{Sec:Qly}).
\label{ap:sigma}
\begin{lemma}[\cite{AMINI20132683}] 
\label{ap:lem:sig1} 
Consider the $d \times d$ matrix $R$ defined by 
\begin{equation*}
    \begin{split}
        R_{n_1,n_2} &= \sum_{\act_k}\, 2 \left|\bra{n_1} \frac{d\,M_{\act_k}^{u\dagger}}{d\,u}|_{u=0}\ket{n_2} \right|^2  + \\ &2\delta_{n_1,n_2}\operatorname{Re}\left( c_{\mu,n_1}\bra{n_1} \frac{d^2M_{\mu}^{u\dagger}}{du^2}|_{u=0}\ket{n_2}\right)
    \end{split}
\end{equation*}
 where $\operatorname{Re}(x)$ denotes the real part of complex valued scalar $x$, and $\delta$ denotes the Kronecker Delta. When $R$ is a nonzero matrix, the non-negative matrix $P = \id - \frac{R}{\Tr(R)}$ is a stochastic matrix (each row sums to one).
\end{lemma}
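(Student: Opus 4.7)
The plan is to verify directly the two defining properties of a row-stochastic matrix for $P = \id - R/\Tr(R)$: first, that each row sums to one, and second, that all entries are non-negative. The key algebraic input is the constraint $\sum_{\act_k} M_{\act_k}^{u\dagger} M_{\act_k}^u = \id$ from assumption (a) in Sec.~\ref{machine_control}, differentiated twice in $u$ and evaluated at $u = 0$, together with the diagonal structure $M_{\act_k}^0 = \sum_n c_{\act_k,n}\ket{n}\bra{n}$ from assumption (b). I will read the $c_{\mu,n_1}$ and $M_\mu$ appearing in the second term of $R_{n_1,n_2}$ as being implicitly summed over actions, matching the first term; otherwise the row-sum identity cannot close.

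First I would compute $\sum_{n_2} R_{n_1, n_2}$ for fixed $n_1$. Inserting the resolution of identity $\sum_{n_2}\ket{n_2}\bra{n_2} = \id$ into the quadratic part collapses the sum of squared magnitudes into $\sum_{\act_k} 2\bra{n_1}\frac{dM_{\act_k}^{u\dagger}}{du}\big|_0\frac{dM_{\act_k}^u}{du}\big|_0\ket{n_1}$, while the Kronecker-delta term survives only at $n_2 = n_1$ and contributes $\sum_{\act_k} 2\operatorname{Re}\bigl(c_{\act_k,n_1}\bra{n_1}\frac{d^2M_{\act_k}^{u\dagger}}{du^2}\big|_0\ket{n_1}\bigr)$. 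On the other hand, the Leibniz expansion of $\frac{d^2}{du^2}\sum_{\act_k}M_{\act_k}^{u\dagger}M_{\act_k}^u\big|_{u=0}=0$, sandwiched by $\bra{n_1}\cdot\ket{n_1}$ and simplified via $M_{\act_k}^0\ket{n_1} = c_{\act_k,n_1}\ket{n_1}$ together with the scalar conjugation identity $\overline{\bra{n_1}A\ket{n_1}} = \bra{n_1}A^\dagger\ket{n_1}$, yields exactly the sum of these two quantities equal to zero. Hence $\sum_{n_2}R_{n_1,n_2} = 0$ for every $n_1$, so $\sum_{n_2}P_{n_1,n_2} = 1$.

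Next I would verify non-negativity. Off-diagonal entries of $R$ are sums of squared magnitudes, hence $R_{n_1,n_2} \geq 0$ for $n_1 \neq n_2$. Combined with zero row sums this forces $R_{n_1,n_1} = -\sum_{n_2 \neq n_1}R_{n_1,n_2} \leq 0$, so $\Tr(R) \leq 0$. Non-zeroness of $R$ forces some off-diagonal to be strictly positive, which in turn makes the corresponding diagonal strictly negative, so $\Tr(R) < 0$ and $P$ is well defined. The off-diagonal entries $P_{n_1,n_2} = -R_{n_1,n_2}/\Tr(R)$ are then non-negative, while each diagonal $P_{n_1,n_1} = 1 - R_{n_1,n_1}/\Tr(R)$ lies in $[0,1]$ because $R_{n_1,n_1}/\Tr(R) \in [0,1]$ (both quantities are non-positive and $|R_{n_1,n_1}| \leq |\Tr(R)|$).

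The main obstacle is the bookkeeping in the twice-differentiated completeness relation: one must carefully expand $\partial_u^2(M_{\act_k}^{u\dagger}M_{\act_k}^u)|_{u=0}$ into three Leibniz terms, take the $(n_1,n_1)$ matrix element, and use the diagonal form of $M_{\act_k}^0$ together with complex-conjugate symmetry to recognize the resulting scalar identity as precisely the row sum of $R$. Once this matching identity is in hand, the sign argument for non-negativity is immediate.
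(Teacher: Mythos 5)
Your proof is correct. Note that the paper itself offers no proof of this lemma: it is imported verbatim from Amini et al.\ \cite{AMINI20132683} and listed ``for completeness,'' so there is nothing internal to compare against; what you have written is essentially the argument of the original reference, reconstructed from the two constraints the paper does state (the Kraus completeness relation $\sum_{\act_k} M_{\act_k}^{u\dagger}M_{\act_k}^{u} = \id$ and the diagonality of $M_{\act_k}^{0}$). Your reading of the second term as implicitly summed over the action index $\mu$ is the right one --- without it the twice-differentiated completeness relation cannot cancel the row sum, and that is indeed how $R$ is defined in the source. The two halves of your argument are both sound: the resolution-of-identity collapse of $\sum_{n_2}\left|\bra{n_1}\tfrac{dM^{u\dagger}_{\act_k}}{du}\big|_{0}\ket{n_2}\right|^2$ matches the middle Leibniz term, the outer two Leibniz terms combine into the real part via $M_{\act_k}^{0}\ket{n_1}=c_{\act_k,n_1}\ket{n_1}$, and the sign bookkeeping (non-negative off-diagonals, zero row sums, hence $\Tr(R)<0$ whenever $R\neq 0$ and $|R_{n_1,n_1}|\le|\Tr(R)|$) is exactly what is needed for stochasticity of $P$.
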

\begin{lemma}[\cite{AMINI20132683}]
\label{ap:lem:sig2}
 Assume that the directed graph $G$ of the matrix $R$ defined in Lemma~\ref{ap:lem:sig1} is strongly connected. For any $\bar{n} \in \{1,\dots,d\}\,$ there exist $d-1$ strictly positive real numbers $e_n, n \in \{1,\dots, d\} \backslash \{\bar{n}\}$, such that 
\begin{itemize}
    \item for any real numbers $\{\lambda_n\}_{n \in \{1,\dots,d\} \backslash \{\bar{n}\}}$, there exists a unique $\ \boldsymbol\sigma = \{\sigma_n\}_{n \in \{1,\dots,d\}} \in \reals^d$ with $\sigma_{\bar{n}}=0$ such that $R\,\boldsymbol\sigma = \boldsymbol\lambda$ and $\lambda_{\bar{n}} = -\sum_{n\neq \bar{n}} e_n\, \lambda_n$
    \begin{itemize}
        \item if additionally $\lambda_n < 0$ for all $n \in \{1,\dots,d\} \backslash \{\bar{n}\}$, then $\sigma_n > 0 \ \forall\, n \in \{1,\dots,d\} \backslash \{\bar{n}\}$.
    \end{itemize}
    \item for any solution $\boldsymbol\sigma \in \reals^d$  of $R\,\boldsymbol\sigma = \boldsymbol\lambda \in \reals^d$, the function $V_0 (\rho) = \sum_{n=1}^d \sigma_n \bra{n} \rho \ket{n}$ satisfies
    \begin{align*}
        \begin{split}
          &\frac{d^2\,V_0(\sum_{a_k}M_{a_k}^{u}(\ket{n}\bra{n})M_{a_k}^{u\,\dagger})}{du^2} |_{u=0} = \lambda_n\\ 
          &\forall\, n \in \{1,\dots,d\}
        \end{split}
    \end{align*}
\end{itemize}
\end{lemma}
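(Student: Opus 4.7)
The plan is to exploit the stochastic structure built by Lemma~\ref{ap:lem:sig1}: $P := \id - R/\Tr(R)$ is a non-negative row-stochastic matrix, and strong connectivity of its directed graph makes $P$ irreducible. Perron--Frobenius will then drive the whole argument. Throughout, fix $\bar n$ as in the statement.

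\emph{Spectral setup and the first bullet.} First I would read off the null spaces of $R$ from those of $\id-P$. Since $P\mathbf{1}=\mathbf{1}$, we have $R\mathbf{1}=0$; and irreducibility supplies a unique strictly positive stationary vector $\pi>0$ with $\pi^{T}P=\pi^{T}$, hence $\pi^{T}R=0$. Simplicity of the Perron eigenvalue forces $\ker R=\mathrm{span}\{\mathbf{1}\}$ and $\ker R^{T}=\mathrm{span}\{\pi\}$, so $R$ has rank $d-1$. The first bullet then drops out of the Fredholm alternative: $R\boldsymbol\sigma=\boldsymbol\lambda$ is solvable iff $\pi^{T}\boldsymbol\lambda=0$, and dividing through by $\pi_{\bar n}$ this reads $\lambda_{\bar n}=-\sum_{n\neq\bar n}e_{n}\lambda_{n}$ with $e_{n}:=\pi_{n}/\pi_{\bar n}>0$. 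When consistency holds, the general solution is $\boldsymbol\sigma+c\mathbf{1}$, $c\in\mathbb{R}$, and the side condition $\sigma_{\bar n}=0$ pins $c$ uniquely.

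\emph{Sign statement (the main obstacle).} This is where I expect the bulk of the work. The off-diagonals of $R$ are $\sum_{a_k}2|\bra{n_1}\dot M_{a_k}^{0\dagger}\ket{n_2}|^{2}\geq 0$, while the diagonal is non-positive. To see the latter I would differentiate the Kraus identity $\sum_{a_k}M_{a_k}^{u\dagger}M_{a_k}^{u}=\id$ (constraint (a) of Sec.~\ref{machine_control}) twice at $u=0$, sandwich between $\bra{n}$ and $\ket{n}$, and use diagonality of $M_{a_k}^{0}$ (constraint (b)) to collapse the $c_{a_k,n}\bra{n}\ddot M_{a_k}^{0\dagger}\ket{n}$ piece against the off-diagonal $|\cdot|^{2}$ terms, yielding $R_{nn}=-\sum_{m\neq n}R_{nm}\leq 0$ (which incidentally reproduces $R\mathbf{1}=0$). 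Thus $Q:=-R$ is a singular, irreducible M-matrix; deleting row and column $\bar n$ leaves a non-singular irreducible M-matrix $Q_{-\bar n,-\bar n}$ whose inverse equals the Neumann series $\sum_{k\geq 0}P_{-\bar n,-\bar n}^{k}$, which is entrywise strictly positive by irreducibility. With $\sigma_{\bar n}=0$, the reduced system $Q_{-\bar n,-\bar n}\sigma_{-\bar n}=-\lambda_{-\bar n}$ then forces $\sigma_{n}>0$ strictly whenever $\lambda_{n}<0$ for every $n\neq\bar n$.

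\emph{Second-derivative identification.} This is a routine calculation. Write
\[V_{0}\Big(\sum_{a_k}M_{a_k}^{u}\ket{n}\bra{n}M_{a_k}^{u\dagger}\Big)=\sum_{m,a_k}\sigma_{m}|\bra{m}M_{a_k}^{u}\ket{n}|^{2},\]
differentiate twice with $(|g|^{2})''=2|g'|^{2}+2\operatorname{Re}(\bar g\,g'')$, and use $\bra{m}M_{a_k}^{0}\ket{n}=c_{a_k,n}\delta_{m,n}$. The $\bar g\,g''$ contributions survive only at $m=n$ and reconstruct the $c_{a_k,n}\bra{n}\ddot M_{a_k}^{0\dagger}\ket{n}$ part of $R_{nn}$, while the $|g'|^{2}$ contributions rebuild $R_{nm}$ for every $m$. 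Matching entries against the definition of $R$ gives $d^{2}V_{0}/du^{2}\big|_{u=0}=(R\boldsymbol\sigma)_{n}=\lambda_{n}$, and the $C^{2}$ hypothesis (constraint (c)) justifies the differentiations.
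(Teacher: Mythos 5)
The paper does not actually prove this lemma: Appendix~\ref{ap:sigma} states explicitly that Lemmas~\ref{ap:lem:sig1} and~\ref{ap:lem:sig2} are ``provided in \cite{AMINI20132683}'' and are listed only for completeness, so there is no in-paper proof to compare yours against. Judged on its own terms, and against the argument in the cited source, your reconstruction is correct and follows essentially the same route: Lemma~\ref{ap:lem:sig1} makes $P=\id-R/\Tr(R)$ an irreducible row-stochastic matrix, so $R\mathbf{1}=0$; Perron--Frobenius gives $\ker R=\mathrm{span}\{\mathbf{1}\}$ and a strictly positive left null vector $\pi$; the Fredholm alternative yields $e_n=\pi_n/\pi_{\bar n}$ and the unique $\boldsymbol\sigma$ with $\sigma_{\bar n}=0$; the sign claim follows from inverse-positivity of the principal submatrix $-R_{-\bar n,-\bar n}$; and the second-derivative identity is exactly the computation you describe, using constraints (b) and (c) of Sec.~\ref{machine_control}. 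One small correction: the Neumann series $\sum_{k\ge 0}P_{-\bar n,-\bar n}^{k}$ need not be entrywise \emph{strictly} positive, since deleting the node $\bar n$ can destroy strong connectivity of the remaining directed graph; what you actually need (and have) is that the series dominates the identity entrywise, so that $\sigma_n\ge \mathrm{const}\cdot(-\lambda_n)>0$ follows already from the diagonal term when every $\lambda_n<0$. With that minor repair your argument is complete and consistent with the proof in \cite{AMINI20132683}.
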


\subsection{Martingale Proof for Density Operator Evolution}
\label{ap:martingale}
As stated in Sec.~\ref{Sec:Qly}, we prove  that under the evolution of (\ref{M_update}), $\bra{b_r} \rho_k \ket{b_r}$ is a martingale, i.e.
\[\EX[\bra{b_r} \rho_{k+T} \ket{b_r} | \mathcal{B}_k] = \bra{b_r} \rho_k \ket{b_r} \ \forall T \in \mathbb{N}\]
where the filtration $\mathcal{B}_k = \sigma(\bra{b_r} \rho_1 \ket{b_r},\dots,\bra{b_r} \rho_k \ket{b_r})$. \\
Denote
\[\mathbb{M}_{\act_k}\rho_k = \mathbb{M}_{\act_k, T=1}^0 \rho_k=\frac{M_{\act_k, 1}^0\, \rho_k\, M_{\act_k, 1}^{0\dagger}}{\Tr(M_{\act_k, 1}^0\, \rho_k\, M_{\act_k, 1}^{0\dagger})} \]

Since $\bra{b_r} \rho_k \ket{b_r} = \Tr(\ket{b_r}\bra{b_r} \rho_k)$, 
\begin{equation*}
\begin{split}
    &\EX[\Tr(\ket{b_r}\bra{b_r}\, \rho_{k+1}) | \rho_k,\control]\\ &=\sum_{\act_k = 1}^m  \Tr(M_{\act_k}\, \rho_k \, M_{\act_k}^{\dagger}) \Tr(\ket{b_r}\bra{b_r}\, \mathbb{M}_{\act_k}\, \rho_k)\\
    &=\sum_{\act_k = 1}^m \Tr(\ket{b_r}\bra{b_r}\, M_{\act_k}\, \rho_k\, M_{\act_k}^{\dagger})\\&=\Tr(\sum_{\act_k = 1}^m  M_{\act_k} M_{\act_k}^{\dagger} \ket{b_r}\bra{b_r} \rho_k)=\Tr(\ket{b_r}\bra{b_r} \rho_k).
\end{split}
\end{equation*} 
Thus $\bra{b_r} \rho_{k} \ket{b_r}$ is a martingale.

%



\bibliographystyle{IEEEtran}  
\bibliography{Bibliography}

%

\begin{IEEEbiography}{Luke Snow}
Biography text here.
\end{IEEEbiography}

\begin{IEEEbiographynophoto}{Shashwat Jain}
Biography text here.
\end{IEEEbiographynophoto}


\begin{IEEEbiographynophoto}{Vikram Krishnamurthy}
Biography text here.
\end{IEEEbiographynophoto}




\end{document}